\newcounter{NoTableEntry}
\renewcommand*{\theNoTableEntry}{NTE-\the\value{NoTableEntry}}
\DeclarePairedDelimiter\abs{\lvert}{\rvert}%
\newcommand{\norm}[1]{\left\lVert#1\right\rVert}
\newtheorem{definition}{Definition}
\newtheorem{proposition}{Proposition}
\newtheorem{theorem}{Theorem}
\newtheorem{problem}{Problem}
\begin{document}

\title{PolyARBerNN: A Neural Network Guided Solver and Optimizer for Bounded Polynomial Inequalities
}

\author{Wael Fatnassi}
\email{wfatnass@uci.edu}
\orcid{1234-5678-9012}
\author{Yasser Shoukry}
\email{yshoukry@uci.edu}
\orcid{0000-0002-8224-8477}
\affiliation{%
  \institution{University of California, Irvine}
  \state{CA}
  \country{USA}
  \postcode{92697}
}








\renewcommand{\shortauthors}{W. Fatnassi and Y. Shoukry}

\begin{abstract}

Constraints solvers play a significant role in the analysis, synthesis, and formal verification of complex cyber-physical systems. In this paper, we study the problem of designing a scalable constraints solver for an important class of constraints named polynomial constraint inequalities (also known as nonlinear real arithmetic theory). In this paper, we introduce a solver named PolyARBerNN that uses convex polynomials as abstractions for highly nonlinears polynomials. Such abstractions were previously shown to be powerful to prune the search space and restrict the usage of sound and complete solvers to small search spaces. Compared with the previous efforts on using convex abstractions, PolyARBerNN provides three main contributions namely (i) a neural network guided abstraction refinement procedure that helps selecting the right abstraction out of a set of pre-defined abstractions, (ii) a Bernstein polynomial-based search space pruning mechanism that can be used to compute tight estimates of the polynomial maximum and minimum values which can be used as an additional abstraction of the polynomials, and (iii) an optimizer that transforms polynomial objective functions into polynomial constraints (on the gradient of the objective function) whose solutions are guaranteed to be close to the global optima. These enhancements together allowed the PolyARBerNN solver to solve complex instances and scales more favorably compared to the state-of-art nonlinear real arithmetic solvers while maintaining the soundness and completeness of the resulting solver. In particular, our test benches show that PolyARBerNN achieved 100X speedup compared with Z3 8.9, Yices 2.6, and PVS (a solver that uses Bernstein expansion to solve multivariate polynomial constraints) on a variety of standard test benches. Finally, we implemented an optimizer called PolyAROpt that uses PolyARBerNN to solve constrained polynomial optimization problems. Numerical results show that PolyAROpt is able to solve high-dimensional and high order polynomial optimization problems with higher speed compared to the built-in optimizer in the Z3 8.9 solver.

\end{abstract}

%


\maketitle

\section{Introduction}
Constraint solvers and optimizers have been used heavily in the design, synthesis, and verification of cyber-physical systems \cite{morecite1, morecite2, morecite3, morecite4, morecite5, morecite6, morecite7, morecite8, morecite9, morecite10, morecite11}. Examples includes verification of neural network controlled autonomous systems \cite{autosysverif}, formal verification of human-robot interaction in healthcare scenarios \cite{robotverif}, automated synthesis for distributed cyber-physical systems \cite{cpssynthesis}, design for cyber-physical systems under sensor attacks \cite{cpsdesign}, air traffic management of unmanned aircraft systems~\cite{airtraffic}, software verification for the next generation space-shuttle \cite{spaceshuttle}, and conflict detection for aircraft~\cite{verifaircraft}.

In this paper, we will focus on the class of general multivariate polynomial constraints (also known as nonlinear real arithmetic). Multivariate polynomial constraints 
appear naturally in the design, synthesis, and verification of these systems. It is not then surprising that the amount of attention given to this problem in the last decade, as evidenced by the amount of off-the-self solvers that are designed to solve feasibility and optimization problems over general multivariate polynomial constraints, including Z3~\cite{Z3}, Coq~\cite{coq}, Yices~\cite{yices}, PVS~\cite{pvsnasa}, Cplex, \cite{cplex}, CVXOPT \cite{cvxopt}, and Quadprog \cite{quadprog}. Regardless of their prevalence in several synthesis and verification problems, well-known algorithms---that are capable of solving a set of polynomial constraints---are shown to be doubly exponential~\cite{complexityproblem1}, placing a significant challenge to design efficient solvers for such problems.


Recently, neural networks (NNs) have shown impressive empirical results in approximating unknown functions. This observation motivated several researchers to ask how to use NNs to tame the complexity of NP-hard problems. Examples are the use of NNs to design scalable solvers for program synthesis~\cite{progsyn}, traveling salesman problem~\cite{travelsales}, first-order theorem proving~\cite{1thepro}, higher-order theorem proving~\cite{highthepro}, and Boolean satisfiability (SAT) problems~\cite{satsolve}. While several of these solvers sacrifice either soundness or correctness guarantees, we are interested in this paper on using such empirically powerful NNs to design a sound and complete solver for nonlinear real arithmetic.


In addition to NNs, polynomials constitute a rich class of functions for which several approximators have been studied. Two of the most famous approximators for polynomials are Taylor approximation and Bernstein polynomials. These two approximators have been successfully used in solvers like Coq and PVS \cite{coqtaylor, pvsnasa}. This opens the question of how to combine all those approximation techniques, i.e., NNs, Taylor, and Bernstein approximations, to come up with a scalable solver that can reason about general multivariate polynomial constraints. 
We introduce PolyARBerNN, a novel sound and complete solver for polynomial constraints that combines these three function approximators (NNs, Taylor, and Bernstein) to prune the search space and produce small enough instances in which existing sound and complete solvers (based on the well-known Cylindrical Algebraic Decomposition algorithm) can easily reason about. In general, we provide the following contributions:



\begin{itemize}
    \item We introduced a novel NN-guided abstraction refinement process in which a NN is used to guide the use of Taylor approximations to find a solution or prune the search space. We analyzed the theoretical characteristics of such a NN and provided empirical evidence on the generalizability of the trained NN in terms of its ability to guide the abstraction refinement process for unseen polynomials with various numbers of variables and orders.
    \item We complement the NN-guided abstraction refinement with a state-space pruning phase using Bernstein approximations that accelerates the process of removing portions of the state space in which the sign of the polynomial does not change.
    
    
    \item We validated our approach by first comparing the scalability of the proposed PolyARBerNN solver with respect to PVS, a library that uses Bernstein expansion to solve polynomial constraints. Second, we compared the execution times of the proposed tool with the latest versions of the state-of-the-art nonlinear arithmetic solvers, such as Z3 8.9, Yices 2.6 by varying the order, the number of variables, and the number of the polynomial constraints for instances when a solution exists and when a solution does not exist. We also compared the scalability of the solver against Z3 8.9 and Yices 2.9 on the problem of synthesizing a controller for a cyber-physical system.
    
    \item We proposed PolyAROpt, an optimizer that uses PolyARBerNN to solve constrained multivariate polynomial optimization problems. Our theoretical analysis shows that PolyAROpt is capable of providing solutions that are $\epsilon$ close to the global optima (for any  $\epsilon > 0$ chosen by the user). 
    Numerical results show that PolyAROpt solves high-dimensional and high-order optimization problems with high speed compared to the built-in optimizer in Z3 8.9 solver. We also validated the effectiveness of PolyAROpt on the problem of computing the reachable sets of polynomial dynamical systems.
\end{itemize}

\textbf{Related work:}
Cylindrical algebraic decomposition (CAD) was introduced by Collins \cite{collins} in 1975 and is considered to be the first algorithm 
to effectively solve general polynomial inequality constraints. Several improvements were introduced across the years to reduce the high time complexity of the CAD algorithm~\cite{Hong, McCalum, brown}. Although the CAD algorithm is sound and complete, it scales poorly with the number of polynomial constraints and their order. Other techniques to solve general polynomial inequality constraints include the use of transformations and approximations to scale the computations. For instance, the authors in \cite{pvsnasa} incorporated Bernstein polynomials in the Prototype Verification System (PVS) theorem prover; these developments are publicly available in the NASA PVS Library. The library uses the range enclosure propriety of Bernstein polynomials to solve quantified polynomial inequalities. However, the library is not complete for non-strict inequalities \cite{pvsnasa} and is not practical for higher dimensional polynomials.
%
%
Another line of work that is related to our work is the use of machine learning to solve combinatorial problems \cite{gcnn,satsolve}. In particular, the authors in~\cite{gcnn} proposed a graph convolutional neural network (GCNN) to learn heuristics that can accelerate mixed-integer linear programming (MILP) solvers.
Similarly, the NeuroSAT solver~\cite{satsolve} uses a message-passing neural network (MPNN) to solve Boolean SAT problems. 
The authors of~\cite{satsolve} showed that NeuroSAT generalizes to novel distributions after training only on random SAT problems. Nevertheless, NeuroSAT is not competitive with state-of-art SAT solvers and it does not have a correctness guarantee. 
 


\section{Problem Formulation}
\textbf{Notation:}
We use the symbols $\mathbb{N}$ and $\mathbb{R}$ to denote the set of natural and real numbers, respectively.
We denote by $x=\big(x_1,x_2,\cdots,x_n\big) \in \mathbb{R}^n$ the vector of real-valued variables, where $x_i \in \mathbb{R}$. We denote by $I_n (\underline{d}, \overline{d}) =\big[\underline{d}_1,\overline{d}_1\big] \times \cdots \times$ $\big[\underline{d}_n,\overline{d}_n\big] \subset \mathbb{R}^{n}$ the $n$-dimensional hyperrectangle where $\underline{d} = \left(\underline{d}_1, \cdots, \underline{d}_n\right)$ and $\overline{d} = \left(\overline{d}_1, \cdots, \overline{d}_n\right)$ are the lower and upper bounds of the hyperrectangle, respectively. 
For a real-valued vector $x =\big(x_1,x_2,\cdots,x_n\big)\in \mathbb{R}^n$ and an index-vector $K = \left(k_1, \cdots, k_n\right) \in \mathbb{N}^n$, we denote by $x^K \in \mathbb{R}$ the scalar $x^K = x_1^{k_1}\cdots x_n^{k_n}$. 
Given two multi-indices $K = \left(k_1, \cdots, k_n\right) \in \mathbb{N}^n$ and $L = \left(l_1, \cdots, l_n\right) \in \mathbb{N}^n$, we use the following notation throughout this paper: $K + L = \left(k_1+l_1, \cdots, k_n + l_n\right)$, ${L \choose K}={l_1 \choose k_1} \cdots {l_n \choose k_n}$, and $\sum\limits_{K \leq L}=\sum\limits_{k_1 \leq l_1}^{}\cdots \sum\limits_{k_n \leq l_n}$. 
A real-valued multivariate polynomial $p:\mathbb{R}^n \rightarrow \mathbb{R}$ is defined as:
\begin{align*}
 p(x_1, \ldots, x_n) & \;=\; \sum_{k_1 = 0}^{l_1 } \sum_{k_2 = 0}^{l_2} \ldots \sum_{k_n = 0}^{l_n} a_{(k_1,\ldots,k_n)} x_1^{k_1} x_2^{k_2} \ldots x_n^{k_n} 
 \;=\;\sum\limits_{K \leq L} a_K x^K,
\end{align*}
where $L = (l_1, l_2, \ldots, l_n)$ is the maximum degree of $x_i$ for all $i = 1, \ldots, n$. We denote by $a_p = (a_{(0,0,\ldots,0)}, \ldots, a_{(l_1, l_2, \ldots, l_n)})$ the vector of all the coefficients of polynomial $p$.
We denote the space of multivariate polynomials with coefficients in $\mathbb{R}$ by $\mathbb{R}[x_1,x_2,\cdots,x_n]$.  
Given a real-valued function $f:\mathbb{R}^n\rightarrow\mathbb{R}$, we denote by  $L^{-}_{0}\left(f\right)$ and $L^{+}_{0}\left(f\right)$ the zero sublevel and zero superlevel sets of f, i.e.,:
\begin{align*}
L^{-}_{0}\left(f\right)=\{x \in \mathbb{R}^n\big|f\big(x\big)\leq 0\}, \qquad L^{+}_{0}\left(f\right)=\{x \in \mathbb{R}^n \big|f\big(x\big)\geq 0\}.
\end{align*}
Finally, a function $f:\mathbb{R}^n\rightarrow\mathbb{R}^m$ is called Lipschitz continuous if there exists a positive real constant $\omega_f \ge 0$ such that, for all $x_1 \in \mathbb{R}^n$ and $x_2 \in \mathbb{R}^n$, the following holds:
$$ \norm{f(x_1) - f(x_2)} \le \omega_f \norm{x_1 - x_2}$$


\textbf{Main Problem:}
In this paper, we focus on two problems namely (Problem 1) the feasibility problems that involve multiple \textit{polynomial inequality constraints} with input ranges confined within closed hyperrectangles and (Problem 2) the constrained optimization problem which aims to maximize (or minimize) a polynomial objective function subject to other polynomial inequality constraints and input range constraints. 
%
\begin{problem}\label{prob1}
\begin{align*}
    \exists x \in I_n (\underline{d}, \overline{d}) \quad \text{such that:} \quad  &p_1\left(x_1,\cdots,x_n\right)~\leq~0 \qquad \qquad \qquad \qquad \qquad \qquad \qquad\\
    &\qquad \qquad \vdots \\
    &p_m\left(x_1,\cdots,x_n\right)~\leq~0
\end{align*}
\end{problem}
\noindent where $p_i\left(x\right)=p_i\left(x_1,\cdots,x_n\right) \in \mathbb{R}[x_1,x_2,\cdots,x_n]$ is a polynomial over variables  $x_1,\cdots,x_n$. Without loss of generality, $p_i\big(x\big)$ $\geq~0$ and $p_i\left(x\right)~=~0$ can be encoded using the constraints above.
Similarly, given a polynomial objective function $p(x) \in \mathbb{R}[x_1,x_2,\cdots,x_n]$, we define the optimization problem as:
\begin{problem}
\begin{align*}
\min\limits_{x \in I_n (\underline{d}, \overline{d})} ~ & p\left(x\right) ~~[\textbf{or} ~ \max\limits_{x \in I_n (\underline{d}, \overline{d})} ~ p\left(x\right)] \qquad\qquad\qquad\\
\text{subject to:} \qquad&
p_1\left(x_1,\cdots,x_n\right)~\leq~0, \\    
&\qquad \qquad \vdots \\
& p_m\left(x_1,\cdots,x_n\right)~\leq~0
\end{align*}

\end{problem}


\section{Convex Abstraction Refinement: Benefits and Drawbacks}
In this section, we overview our previously reported framework for using
convex abstraction refinement process introduced in~\cite{polyar} along with some drawbacks that motivate the need for the proposed framework.


\subsection{Overview of Convex Abstraction Refinement}
Sound and complete algorithms that solve Problem 1 are known to be doubly exponential in $n$ with a total running time that it is bounded by $\left(m~\overline{deg}\right)^{2^n}$ \cite{complexityproblem1}, where $\overline{deg}$ is the maximum degree among the polynomials $p_1, \ldots, p_m$. Since the complexity of the problem grows exponentially, it is useful to remove (or prune) subsets of the search space in which the solution is guaranteed not to exist. Since Problem 1 asks for an $x$ in $\mathbb{R}^n$ for which all the polynomials are negative, a solution does not exist in subsets of $\mathbb{R}^n$ at which one of the polynomials $p_i$ is always positive (i.e., $L^{+}_{0}\left(p_i\right)$). In the same way, finding regions of the input space for which some of the polynomials are negative $L^{-}_{0}\left(p_i\right)$ helps with finding the solution faster.

To find subsets of $L^{+}_{0}\left(p_i\right)$ and $L^{-}_{0}\left(p_i\right)$ efficiently, the use of ``convex abstractions'' of the polynomials was previously proposed by the authors in~\cite{polyar}. Starting from a polynomial $p_i(x) \in \mathbb{R}[x]$ and a hyperrectangle $I_n \subset \mathbb{R}^n$, the framework in~\cite{polyar} computes two quadratic polynomials $O^{p_i}_j$ and $U^{p_i}_j$ such that:
\begin{align}
    U^{p_i}_j(x) \le p(x) \le O^{p_i}_j(x), \qquad \forall x \in I_n,
\end{align}
where $O$ and $U$ stands for Over-approximate and Under-approximate quadratic polynomials, respectively, and the subscript $j$ in $O^{p_i}_j(x)$ and $U^{p_i}_j(x)$ encodes the iteration index of the abstraction refinement process. It is easy to notice that the zero superlevel set of $U^{p_i}_j(x)$ is a subset of $L^{+}_{0}(p_i)$, i.e., $L^{+}_{0}(U^{p_i}_j) \subseteq L^{+}_{0}(p_i)$. Similarly, the zero sublevel set of $O^{p_i}_j(x)$ is a subset of $L^{-}_{0}(p_i)$, i.e., $L^{-}_{0}(O^{p_i}_j) \subseteq L^{-}_{0}(p_i)$. Moreover, being convex polynomials, identifying the zero superlevel sets and zero sublevel sets of $O^{p_i}_j(x)$ and $U^{p_i}_j(x)$ can be computed efficiently using convex programming tools. By iteratively refining these upper and lower convex approximations, the framework in~\cite{polyar} was able to rapidly prune the search space until regions with relatively small volumes are identified, at which sound and complete tools such as Z3 8.9 and Yices 2.6 (which are based on the Cylindrical Algebraic Decomposition algorithm) are used to search these small regions, efficiently, to find a solution. It is important to notice that these solvers (especially Yices) are optimized for the cases when the search space is a bounded hyperrectangle.


\begin{figure*}[!ht]
    \centering
	\includegraphics[width=0.49\textwidth]{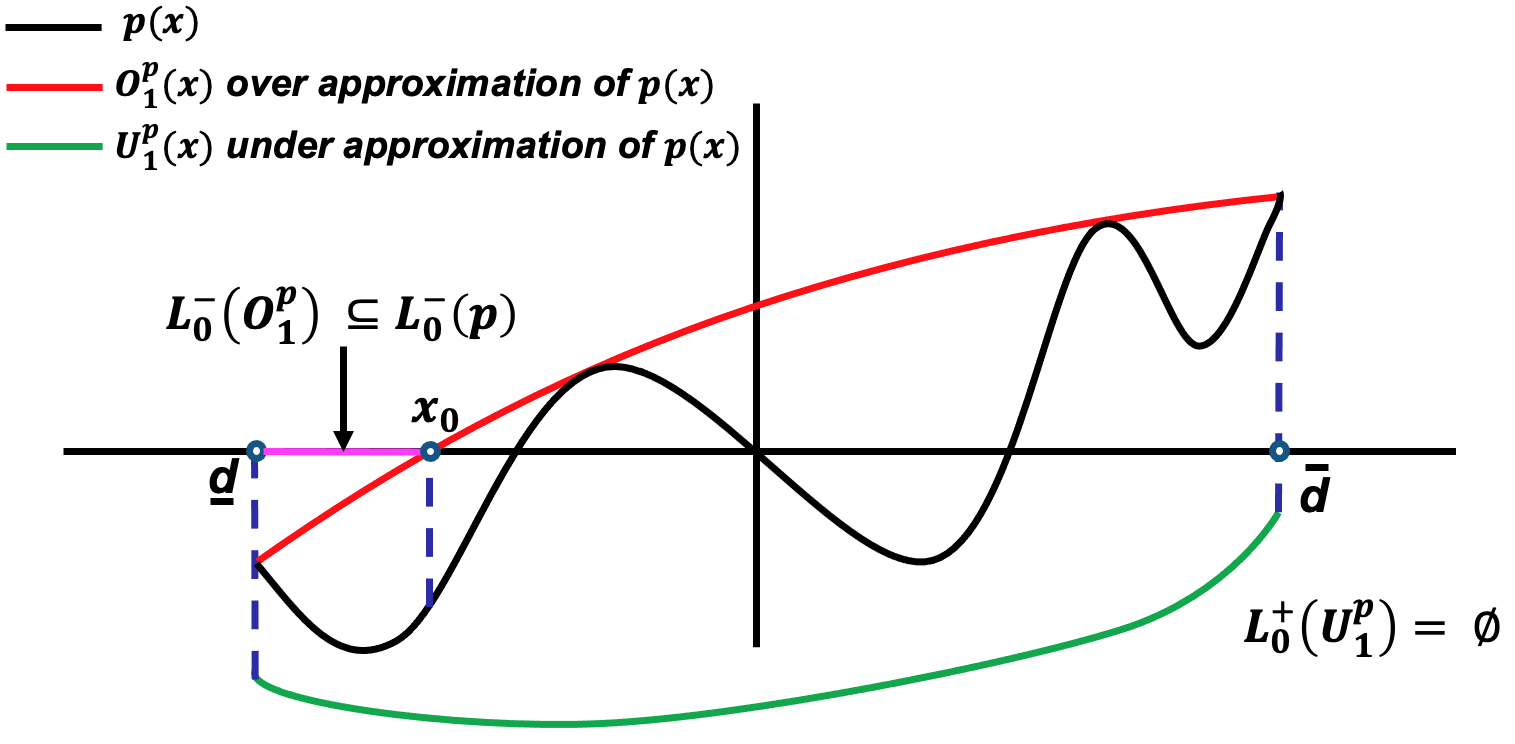} 
	\includegraphics[width=0.49\textwidth]{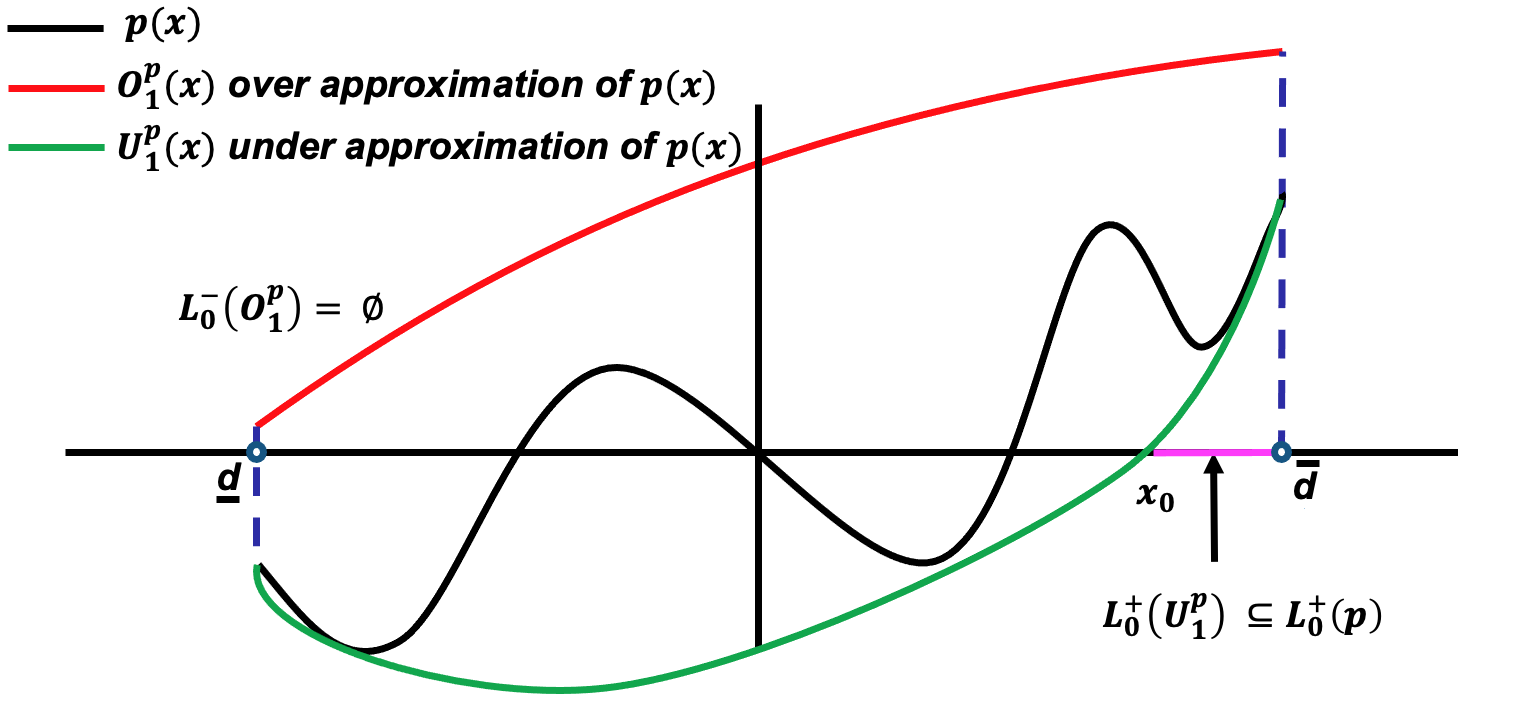} \\
	\includegraphics[width=0.49\textwidth]{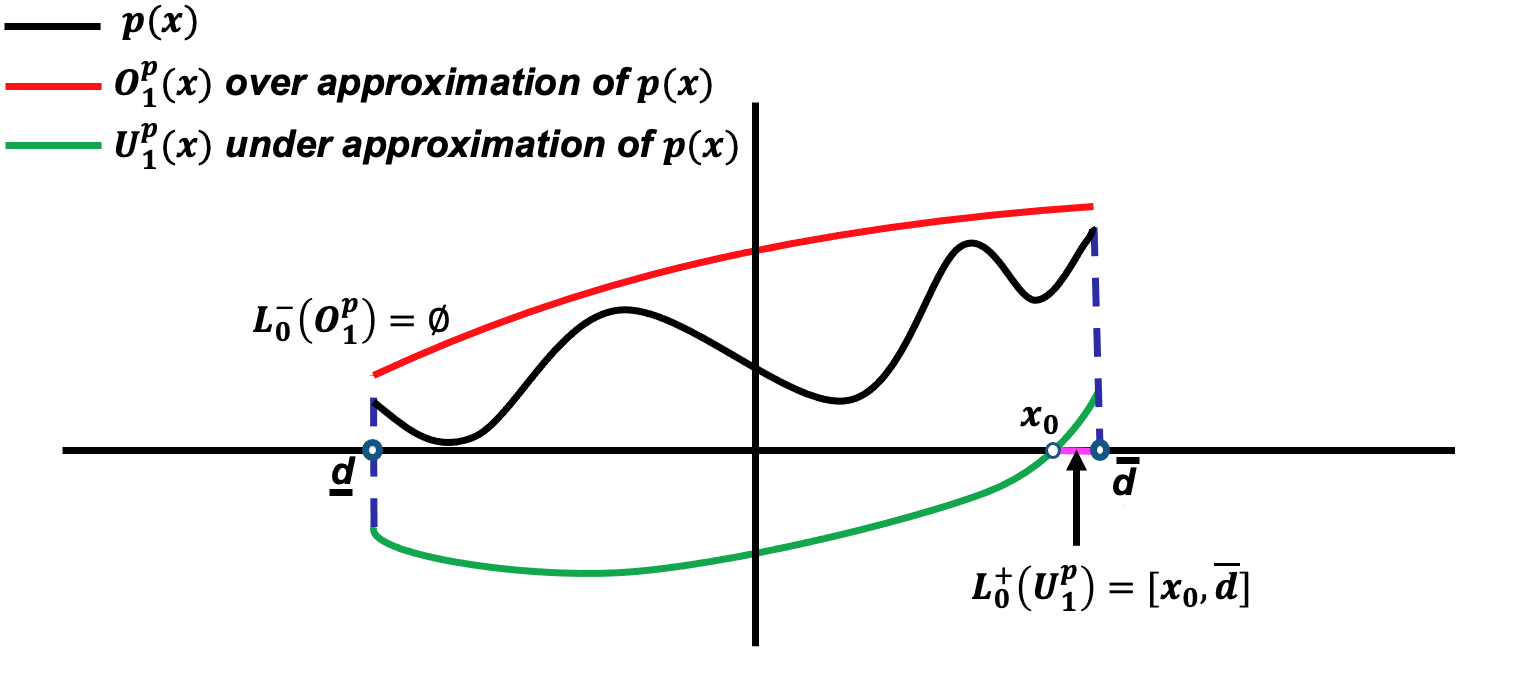} 
	\includegraphics[width=0.49\textwidth]{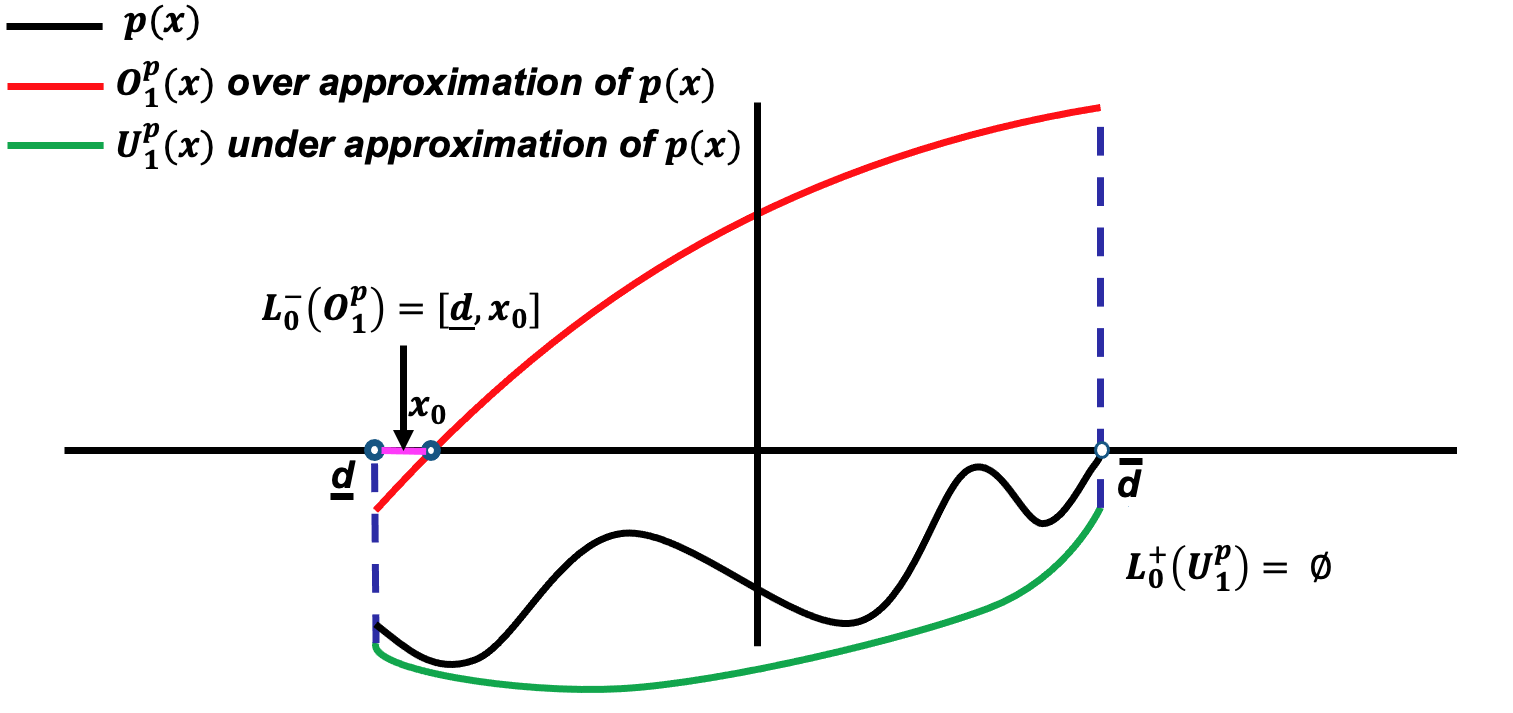}
 	\vspace{-5mm}
	\caption{Exemplary cases where abstracting higher order polynomial (black curves) using convex approximations fails to provide helpful information: \textbf{Top-Left:} under-approximation (green curve) is entirely negative and hence fails to identify any subsets of $L^{+}_{0}\left(p\right)$. \textbf{Top-Right:} over-approximation (red curve) is entirely positive and hence fails to identify subsets of $L^{-}_{0}\left(p\right)$. \textbf{Bottom:} under/over approximations failed to identify polynomials that are consistently positive (left) or negative (right).
	}
	\label{F1}
\end{figure*}

\subsection{Drawbacks of Convex Abstraction Refinement} 
Although the prescribed convex abstraction refinement process was shown to provide several orders of magnitude speedup compared to the state-of-the-art~\cite{polyar}, it adds unnecessary overhead in certain situations. In particular, and as shown in Figure~\ref{F1}, the quadratic abstractions $O^{p_i}_j(x)$ and $U^{p_i}_j(x)$ may fail to identify meaningful subsets of $L^{-}_{0}\left(p_i\right)$ and $L^{+}_{0}\left(p_i\right)$. One needs to split the input region to tighten the over-/under-approximation in such cases. Indeed, applying the convex abstraction refinement process, above, may lead to several unnecessary over-approximations or under-approximations until a tight one that prunes the search space is found. These drawbacks call for a methodology that is capable of:
\begin{enumerate}
    \item Guiding the abstraction refinement process: To reduce the number of unnecessary computations of over/under approximations, one needs a heuristic that guides the convex abstraction refinement process. In particular, such a heuristic needs to consider the properties of the polynomials and the input region to estimate the volume of the sets that the convex under/over-approximation will identify.
    
    \item Alternative Abstraction: As shown in Figure~\ref{F1} (bottom), abstracting high-order polynomials using convex ones may fail to identify easy cases when the polynomial is strictly positive or negative. Therefore, it is beneficial to use alternative ways to abstract high-order polynomials that can augment the convex abstractions.
\end{enumerate}
Designing a strategy that addresses the two requirements above is the main topic for the following two sections.

\section{Neural Network Guided Convex Abstraction Refinement}
In this section, we are interested in designing and training a Neural Network (NN) that can be used to guide the abstraction refinement process. Such NN can be used as an oracle by the solver to estimate the volume of the zero super/sub-level sets (for each polynomial) within a given region $I_n(\underline{d}, \overline{d})$ and select the best approximation strategy out of three possibilities namely: (i) apply convex under-approximation, (ii) apply convex over-approximation, and (iii) split the region to allow for finer approximations in the subsequent iterations of the solver. In this section, we aim to develop a scientific methodology that can guide the design of such NN.



\subsection{On the relation between the NN architecture and the characteristics of the polynomials:}

In this subsection, we aim to understand how the properties of the polynomials affect the design of the NN. We start by reviewing the following result from the machine learning literature:
\begin{theorem}[Theorem 1.1~\cite{NNcomplexity1}] \label{thm:errorBound}
    There exists a Rectifier Linear Unit (ReLU)-based neural network $\phi$ that can estimate a continuous function $f$ such that the estimation error is bounded by:
    $$ || \phi - f || \le \omega_f \; \sqrt{d} \; O(N^{-2/d} L^{-2/d})$$
    where $N, L, d$ are the neural network depth, the neural network width, and the number of neural network inputs, respectively, and $\omega_f$ is the Lipschitz constant of the function $f$. Moreover, this bound is nearly tight.
\end{theorem}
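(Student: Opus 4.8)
The plan is to prove the two halves of the statement separately: for the existence part I would construct an \emph{explicit} deep ReLU network that realizes a grid interpolant of $f$, and for the ``nearly tight'' part I would match the resulting rate with a linear–region counting lower bound. First I would rescale the domain to the unit cube $[0,1]^d$ (a general hyperrectangle only costs an affine input layer, hence a change of constants), and reduce to the following quantitative claim: for every resolution $M\in\mathbb{N}$ there is a ReLU network $\phi$, of depth $O(N)$ and width $O(L)$ with $M=\Theta(N^{2/d}L^{2/d})$, such that $\norm{\phi-f}\le \omega_f\sqrt d/M+(\text{lower order})$. Choosing $M$ as large as this budget allows then yields exactly the advertised bound, the factor $\omega_f$ entering through the Lipschitz estimate and $\sqrt d$ through the diameter of a grid cell in $\mathbb{R}^d$.

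For the upper bound I would partition $[0,1]^d$ into $M^d$ congruent subcubes of side $1/M$ and use two sub-networks composed in series. The first is a \emph{quantizer}: invoking the ``power of depth'' (a single layer of width $L$ produces $\Theta(L)$ linear pieces, and composing $\Theta(N)$ sawtooth/telescoping gadgets multiplies the piece count), a network of depth $O(N)$ and width $O(L)$ can realize a one–dimensional staircase with $K=\Theta(N^2L^2)$ levels; applied coordinatewise this maps $x$ to the index $k(x)\in\{0,\dots,M^d-1\}$ of its subcube once $M^d=\Theta(N^2L^2)$. The second is a \emph{value assignment} built by the bit–extraction technique: given the integer $k$, it outputs a number within $\delta$ of $f$ evaluated at the center of subcube $k$ — storing $M^d$ reals to precision $\delta$ costs $O(M^d\log(1/\delta))$ bits, which still fits the depth/width budget for polynomially small $\delta$. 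On the interior of subcube $k$ one then gets $\norm{\phi(x)-f(x)}\le \omega_f\,\mathrm{diam}(\text{subcube})+\delta=\omega_f\sqrt d/M+\delta$, giving $\norm{\phi-f}=O(\omega_f\sqrt d\,N^{-2/d}L^{-2/d})$.

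The step I expect to be the real obstacle is the ``trifling region'': the staircase quantizer is necessarily inaccurate on a thin neighborhood of the grid faces, where it interpolates between levels, and since the theorem is an $L^\infty$ statement this set cannot simply be discarded. I would handle it either by superposing a small number of mutually shifted grids together with a ReLU partition of unity supported away from each grid's faces, or by verifying directly that, by Lipschitz continuity of $f$, the interpolation error on the trifling region is itself $O(\omega_f/M)$. The delicate point is carrying out whichever option without inflating the width/depth bounds of the bit–extraction sub-network.

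Finally, for near-tightness I would use a counting/packing argument: a width-$w$ depth-$\ell$ ReLU network is continuous piecewise linear with at most $(cw)^{d\ell}$ linear regions and pseudo-dimension $\widetilde{O}(w^2\ell^2)$, while a standard packing of the Lipschitz ball yields $\exp(\Omega(M^d))$ functions on $[0,1]^d$ that are pairwise $\Omega(\omega_f/M)$-separated in $L^\infty$. Approximating all of them to accuracy $\varepsilon$ therefore forces $N^2L^2=\widetilde{\Omega}(\varepsilon^{-d})$, i.e. $\varepsilon=\widetilde{\Omega}(\sqrt d\,N^{-2/d}L^{-2/d})$, which matches the upper bound up to logarithmic factors and is the sense in which the rate is nearly tight.
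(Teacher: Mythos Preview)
The paper does not prove this statement at all: Theorem~\ref{thm:errorBound} is quoted verbatim from an external reference (labeled ``Theorem~1.1~\cite{NNcomplexity1}'') and is used purely as background to motivate the design of the neural network. There is therefore no ``paper's own proof'' to compare your proposal against; the authors simply invoke the result and move on.

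That said, your sketch is a reasonable outline of the Shen--Yang--Zhang style construction that underlies results of this type: a coordinatewise deep staircase/bit-extraction quantizer to index grid cells, a memorization sub-network to read off sampled values of $f$, the Lipschitz bound $\omega_f\sqrt{d}/M$ on each cell, and a VC/linear-region counting argument for near-tightness. The one place where your plan is genuinely incomplete is the trifling-region step: the ``verify directly that the interpolation error is $O(\omega_f/M)$'' option does not work as stated, because on the transition layer the bit-extraction output can jump by an amount comparable to $\norm{f}_\infty$, not $\omega_f/M$. The standard fix is your first option---a finite family of shifted grids with a ReLU partition of unity---and carrying this through without blowing up the width is exactly the technical heart of the cited paper, so you should not list it as an alternative but as the route you would actually take.
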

The above result can be interpreted as follows. The depth $N$ and width $L$ of a neural network depend on the rate of change of the underlying function (captured by its Lipschitz constant $\omega_f$). That is, if we use a NN to estimate a function with a high $\omega_f$, then one needs to increase the depth $N$ and width $L$ of the NN to achieve an acceptable estimation error.

Now we aim to connect the result above with the characteristics of the polynomials. To that end, we recall the definition of ``condition numbers'' of a polynomial~\cite{faroukinumstab}:

%
%
\begin{definition}
\label{def:condition}
Given a polynomial $p\left(x\right) = \sum\limits_{K}a_Kx^K$ and a root $x_0$ of $p$, the quantity $C_{a_p}\left(x_0\right)$ is called the condition number for the root $x_0$. The condition number characterizes the sensitivity of the root $x_0$ to a perturbation of the coefficients $a_p$. That is, if we allow a random perturbation of a fixed relative magnitude $\epsilon = \abs{\frac{\delta a_K}{a_K}}$ in each coefficient $a_K$ in $a_p$, then the magnitude of the maximum displacement $\delta x_0$ of a root $x_0$ is bounded as: $\abs{\delta x_0} \leq C_{a_p}\left(x_0\right) \epsilon$. For a polynomial with multiple roots, then we define the condition number of the polynomial $\overline{C}_{a_p}$ as the largest $C_{a_p}\left(x_0\right)$ among all roots, i.e., $\overline{C}_{a_p} = sup_{x_0 \in \{ x | p(x) = 0\}}C_{a_p}\left(x_0\right)$.
\end{definition}


We are now ready to present our first theoretical result that connects the condition number of polynomials to the NN architecture. As stated before, we are interested in designing an NN that can estimate the zero sub/super level volume set within a given region. We show that the larger the condition number, the larger the neural network depth and width, as captured by the following result.

\begin{theorem}
~\label{prop:lipNN}
Given a polynomial $p$ with coefficients $a_p$ and a region $I_n(\underline{d}, \overline{d})$.
There exists a neural network $NN(a_p, I_n)$ that estimates the volume of zero sub/super level sets from the polynomial coefficients $a_p$. The Lipschitz constant of this $NN(a_p, I_n)$, denoted by $\omega_{NN}$ is bounded by $\mathcal{O}(n_r \overline{n}_r \overline{C}_{a_p})$
where $\overline{C}_{a_p}$ is the condition number of the polynomial $p$, $ \overline{n}_r = max(n, n_r)$ and $n_r$ is the number of roots of the polynomial $p$.
\end{theorem}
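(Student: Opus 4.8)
The plan is to split the statement into two pieces: first show that the \emph{exact} volume map $V : a_p \mapsto \mathrm{vol}_n\big(L^{-}_{0}(p)\cap I_n(\underline{d},\overline{d})\big)$ (and, symmetrically, the superlevel version) is Lipschitz in the coefficient vector $a_p$ with Lipschitz constant of order $\mathcal{O}(n_r\,\overline{n}_r\,\overline{C}_{a_p})$, and then transfer this bound to a ReLU network that estimates $V$ by invoking Theorem~\ref{thm:errorBound}. Writing $V(a_p)=\int_{I_n}\mathbf{1}\{p(x)\le 0\}\,dx$, for a relative coefficient perturbation of magnitude $\epsilon=\abs{\delta a_K/a_K}$ we have $\abs{V(a_p+\delta a_p)-V(a_p)}\le \mathrm{vol}_n(S)$, where $S$ is the symmetric difference of the two zero-sublevel sets. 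A point can change its membership only if it lies between the perturbed and unperturbed zero sets, so $S$ is contained in a tube around the real zero set $Z=\{x\in I_n : p(x)=0\}$ whose radius, by Definition~\ref{def:condition}, is at most $\overline{C}_{a_p}\,\epsilon$.

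The next step estimates the tube volume by $\mathrm{vol}_n(S)\le \big(2\overline{C}_{a_p}\,\epsilon+o(\epsilon)\big)\,\mathrm{vol}_{n-1}(Z)$, the product of the tube width and the $(n-1)$-dimensional surface measure of the zero set inside the box, so it remains to bound $\mathrm{vol}_{n-1}(Z)$. Here I would use an integral-geometry (Cauchy--Crofton type) argument: every axis-parallel line meets $Z$ in at most $n_r$ points, hence the orthogonal projection of $Z$ onto each of the $n$ coordinate hyperplanes has $(n-1)$-measure at most $n_r$ times the (bounded) area of the corresponding face of $I_n$; summing over the $n$ directions gives $\mathrm{vol}_{n-1}(Z)=\mathcal{O}(n\,n_r)=\mathcal{O}(\overline{n}_r\,n_r)$ since $\overline{n}_r=\max(n,n_r)\ge n$. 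Combining the two estimates yields $\abs{V(a_p+\delta a_p)-V(a_p)}=\mathcal{O}(n_r\,\overline{n}_r\,\overline{C}_{a_p})\,\epsilon$, and translating the relative perturbation $\epsilon$ of Definition~\ref{def:condition} into the norm $\norm{\delta a_p}$ (absorbing the fixed coefficient scale into the constant) shows that $V$ is Lipschitz with $\omega_V=\mathcal{O}(n_r\,\overline{n}_r\,\overline{C}_{a_p})$.

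Finally I would apply Theorem~\ref{thm:errorBound} to the Lipschitz function $V$: there exists a ReLU network $NN(a_p,I_n)$ that estimates $V$ to any prescribed accuracy, and such an approximant of an $\omega_V$-Lipschitz function can itself be taken with Lipschitz constant of the same order as $\omega_V$ (if needed, by post-composing with a $1$-Lipschitz clamp onto the interval of attainable volumes, which does not increase the Lipschitz constant). Hence $\omega_{NN}=\mathcal{O}(\omega_V)=\mathcal{O}(n_r\,\overline{n}_r\,\overline{C}_{a_p})$. The step I expect to be the main obstacle is the surface-measure bound $\mathrm{vol}_{n-1}(Z)=\mathcal{O}(\overline{n}_r\,n_r)$: making the ``at most $n_r$ roots along a line'' claim rigorous for a genuinely multivariate zero set rather than a finite collection of isolated real roots, and bounding the contribution of degenerate slices on which $p$ vanishes identically, is where the argument is delicate and is precisely where the dependence on both $n$ and $n_r$ arises; by comparison, transferring the bound to the network's Lipschitz constant is a secondary technical point, since Theorem~\ref{thm:errorBound} as stated controls the approximation error rather than the approximant's Lipschitz constant.
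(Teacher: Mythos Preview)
Your approach is genuinely different from the paper's. The paper does \emph{not} bound the Lipschitz constant of the exact volume map and then approximate; instead it \emph{constructs} a specific network as a composition of four sub-networks (roots-from-coefficients, region splitter, zero-crossing indicator, counter), proves a Lipschitz bound for each piece separately (Propositions~\ref{prop:nn_estimate}--\ref{prop:nn_count}), and then chains these bounds through the composition. The condition number enters only through the first sub-network $NN_{a_p\to X_0}$, the factor $n$ enters through the region-splitting sub-network acting on the input $I_n$, and the outer $n_r$ comes from the indicator sub-network; the product structure $\mathcal{O}(n_r\,\overline{n}_r\,\overline{C}_{a_p})$ then falls out of the chain rule for Lipschitz constants. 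Your geometric route (tube volume times a Cauchy--Crofton surface bound) is more conceptual and explains \emph{why} the volume functional should have this sensitivity, whereas the paper's decomposition is more mechanical but makes the existence of the network and the origin of each factor completely explicit.

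There is, however, a real gap in your argument, and it is precisely the step you labelled ``secondary.'' Theorem~\ref{thm:errorBound} bounds only the uniform approximation error $\Vert\phi-f\Vert$; it says nothing about the Lipschitz constant of the approximant $\phi$, and in general a ReLU network that $\varepsilon$-approximates an $\omega_V$-Lipschitz function can have Lipschitz constant arbitrarily larger than $\omega_V$. Post-composing with a $1$-Lipschitz clamp onto the range of $V$ does not help: clamping controls the output range, not the slope of $\phi$ before the clamp, so $\omega_{NN}$ is still unbounded by this argument. To rescue your route you would need a separate result guaranteeing a ReLU approximant whose Lipschitz constant is $\mathcal{O}(\omega_V)$ (e.g., via max-affine interpolation on a grid), which is not what Theorem~\ref{thm:errorBound} provides. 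The paper sidesteps this entirely: because it exhibits the network explicitly as a composition of pieces with known Lipschitz constants, the bound on $\omega_{NN}$ is direct and never passes through an approximation theorem. A smaller point: the paper treats $(a_p,I_n)$ jointly as the input and perturbs both, which is where the $n$ in $\overline{n}_r=\max(n,n_r)$ comes from via Proposition~\ref{prop:nn_split}; your analysis holds $I_n$ fixed and recovers the $n$-dependence only through the surface-measure estimate, so if you pursue your approach you should also account for variations in the box.
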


To prove the result, we will proceed with an existential argument. We will show that a NN that matches the properties above exists without constructing such a NN. As shown in Figure~\ref{3NN}, the neural network $NN(a_p, I_n)$ consists of multiple sub-neural networks. In particular, the first sub-neural network $NN_{a_p \rightarrow X_0}$ computes all the roots $X_0 = (x_0^1, \ldots, x_0^{n_r})$ of the polynomial (where $n_r$ is the number of roots) from the coefficients $a_p$, i.e.:
\begin{align}
    X_0 = NN_{a_p \rightarrow X_0} (a_p).
    \label{eq:estimate}
\end{align}
Note that $NN_{a_p \rightarrow X_0}$ does not depend on the region $I_n$ and hence the roots $X_0$ may not lie inside the region $I_n$. Moreover, Theorem~\ref{prop:lipNN} asks for a NN that estimates the volume of the zero sub/super level sets and not the location of the roots. To that end, our strategy is to split the region $I_n$ into sub-regions of fixed volume and check if a root lies within each of these sub-regions. If a sub-region does not have a root (i.e., there is no zero crossing inside this sub-region) and the evaluation of the polynomial at any point in this region turns to be positive, then this sub-region belongs to the super level set of $p$ and similarly for the sublevel set of $p$. By counting the number of the sub-regions with no zero crossings and multiplying this count by the volume of these sub-regions, we can provide an estimate of the sub/super level sets. Such a process can be performed using the following three sub-neural networks:

\begin{itemize}
    \item The sub-neural network $NN_{I_n \rightarrow I^{i}_n}$ splits the region $I_n$ into $l$ sub-regions $I^{1}_n, \ldots, I^{l}_n$ and return the bounds of the $i$th sub-region, i.e.:
    \begin{align}
        (\underline{d}^i, \overline{d}^i) & = NN_{I_n \rightarrow I^{i}_n} (I_n), \qquad i \in \{1, \ldots, l\}.
        \label{eq:split}
    \end{align}
    
    \item The sub-neural network $NN_{X_0 \rightarrow ZC_i}$ checks the location of the roots $(x^1_0, \ldots, x_0^{n_r})$ and returns a binary indicator variable $ZC_i$ that indicates whether a zero-crossing takes place within the $i$th sub-region or whether the polynomial is always positive/negative within the $i$th sub-region, i.e.:
    \begin{align}
        ZC_i(a_p, I^i_n) &= NN_{X_0 \rightarrow ZC_i} \left( NN_{a_p \rightarrow X_0} (a_p), NN_{I_n \rightarrow I^{i}_n} (I_n)\right).
        \label{eq:indicator}
    \end{align}
    
    \item The final output $NN(a_p, I_n)$ is computed using the sub-neural network $NN_{ZC \rightarrow L^{+}/L^{-}}$ which counts the number of regions that has no zero-crossing (using the indicators $ZC_1, \ldots, ZC_l$) and compute the estimate of the zero sub/super level sets, i.e.:
    \begin{align}
        NN(a_p, I_n) &= NN_{ZC \rightarrow L^{+}/L^{-}} \left(ZC_1(a_p, I^1_n), \ldots, ZC_l(a_p, I^l_n) \right).
        \label{eq:count}
    \end{align}
\end{itemize}


The Lipschitz constants of these sub-neural networks are captured by the following four propositions whose proof can be found in the appendix.

\begin{proposition}
\label{prop:nn_estimate}
Consider the sub-neural network $NN_{a_p \rightarrow X_0} (a_p)$ defined in~\eqref{eq:estimate}. The Lipschitz constant of $NN_{a_p \rightarrow X_0} (a_p)$ is bounded by $\mathcal{O}(n_r \overline{C}_{a_p})$.
\end{proposition}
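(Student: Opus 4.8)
The plan is to follow the same existential strategy used elsewhere in this section: rather than constructing $NN_{a_p \rightarrow X_0}$, I would exhibit the continuous map it is meant to realize, bound that map's Lipschitz constant, and then invoke Theorem~\ref{thm:errorBound} to obtain a ReLU network of comparable regularity. Let $\Phi$ denote the true root map $a_p \mapsto X_0 = (x_0^1,\dots,x_0^{n_r})$ that sends a coefficient vector to the tuple of roots of the associated polynomial, so that $NN_{a_p \rightarrow X_0}$ is an approximation of $\Phi$. Since a ReLU network realizes a piecewise-linear function, a sufficiently fine piecewise-linear interpolant of $\Phi$ is realizable by a ReLU network (cf.\ Theorem~\ref{thm:errorBound}) and can be taken with Lipschitz constant no larger than $\omega_\Phi$ up to any prescribed tolerance. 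Hence it suffices to prove $\omega_\Phi = \mathcal{O}(n_r \overline{C}_{a_p})$.

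To bound $\omega_\Phi$, I would read Definition~\ref{def:condition} as a per-root, local Lipschitz estimate. Fix a coefficient perturbation $\delta a_p$ of uniform relative magnitude $\epsilon = \abs{\delta a_K / a_K}$; then each root moves by at most $\abs{\delta x_0^k} \le C_{a_p}(x_0^k)\,\epsilon \le \overline{C}_{a_p}\,\epsilon$ for $k = 1, \dots, n_r$. Aggregating over the $n_r$ roots, $\norm{\delta X_0} \le \sqrt{n_r}\,\overline{C}_{a_p}\,\epsilon \le n_r\,\overline{C}_{a_p}\,\epsilon$. Converting the relative magnitude back to an absolute coefficient-norm perturbation via $\norm{\delta a_p} = \epsilon\,\norm{a_p}$ (so that $\epsilon = \norm{\delta a_p}$ for normalized coefficient vectors) gives $\norm{\delta X_0} \le n_r\,\overline{C}_{a_p}\,\norm{\delta a_p}$, i.e.\ $\omega_\Phi \le n_r\,\overline{C}_{a_p}$; the general unnormalized case only rescales by the fixed factor $1/\norm{a_p}$, which is absorbed into the $\mathcal{O}(\cdot)$. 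Combining with the previous paragraph, $NN_{a_p \rightarrow X_0}$ can be chosen with Lipschitz constant $\mathcal{O}(n_r \overline{C}_{a_p})$, as claimed.

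The main obstacle I anticipate is making the passage from $\Phi$ to the network fully rigorous: Theorem~\ref{thm:errorBound} bounds the approximation error, not the Lipschitz constant, so one must argue separately that the approximant can be taken Lipschitz with a constant arbitrarily close to $\omega_\Phi$ (the piecewise-linear interpolation argument sketched above). A secondary subtlety is that $\Phi$ is only locally continuous and Lipschitz, away from coefficient vectors at which roots coalesce or the count $n_r$ changes, so the estimate should be understood over a region of coefficient space on which the root structure is stable; this is precisely the regime in which $\overline{C}_{a_p}$ is finite and meaningful. Finally, one should fix the normalization convention for $a_p$ at the outset so that the relative-to-absolute conversion used above is unambiguous.
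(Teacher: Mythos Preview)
Your proposal is correct and follows essentially the same approach as the paper: bound each root's displacement by $\overline{C}_{a_p}\,\epsilon$ via Definition~\ref{def:condition}, then aggregate over the $n_r$ roots to pick up the $\sqrt{n_r}$ factor. The paper is in fact less careful than you are---it simply identifies $NN_{a_p\rightarrow X_0}$ with the root map itself (consistent with the existential argument) and applies the condition-number inequality directly as a Lipschitz bound, without separately addressing the $\Phi$-to-network passage or the relative/absolute normalization you flag; so the obstacles you anticipate are real subtleties, but the paper does not resolve them either.
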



\begin{proposition}
\label{prop:nn_split}
Consider the sub-neural network $NN_{I_n \rightarrow I^{i}_n}$ defined in~\eqref{eq:split}. The Lipschitz constant of $NN_{I_n \rightarrow I^{i}_n}$ is bounded by $\mathcal{O}(n)$.
\end{proposition}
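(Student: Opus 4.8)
The plan is to exploit the fact that the region-splitting map defined in~\eqref{eq:split} is \emph{affine} in its inputs, so that its Lipschitz constant is simply the operator norm of a constant Jacobian. Recall that $I_n$ is encoded by the $2n$-dimensional vector $(\underline{d},\overline{d})$ and the output of $NN_{I_n \rightarrow I^{i}_n}$ is the $2n$-dimensional vector $(\underline{d}^i,\overline{d}^i)$. First I would write down, for the chosen splitting rule, an explicit formula for each output coordinate in terms of the input coordinates. For the canonical choice in which each axis $t$ is divided into $k$ equal pieces (so $l = k^n$ and the $i$-th sub-region is indexed by $(j_1,\dots,j_n)$ with $0 \le j_t \le k-1$), one has
\begin{align*}
\underline{d}^i_t &= \Bigl(1-\tfrac{j_t}{k}\Bigr)\underline{d}_t + \tfrac{j_t}{k}\,\overline{d}_t, &
\overline{d}^i_t &= \Bigl(1-\tfrac{j_t+1}{k}\Bigr)\underline{d}_t + \tfrac{j_t+1}{k}\,\overline{d}_t,
\end{align*}
i.e. every output coordinate is a convex combination of at most two input coordinates. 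Any other reasonable rule (for instance, repeatedly bisecting the longest edge) has the same structure, the new boundary being a midpoint $\tfrac12(\underline{d}_t+\overline{d}_t)$.

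Next I would read off the Jacobian $A \in \mathbb{R}^{2n \times 2n}$ of this map. By the displayed formulas, every entry of $A$ lies in $[0,1]$, every row has at most two nonzero entries, and every column (corresponding to a fixed $\underline{d}_t$ or $\overline{d}_t$) also has at most two nonzero entries. Hence $\|A\|_\infty \le 2$ and $\|A\|_1 \le 2$, so $\|A\|_2 \le \sqrt{\|A\|_1\,\|A\|_\infty} \le 2$; even the crude bound $\|A\|_2 \le \|A\|_F \le \sqrt{4n}$ would give $\mathcal{O}(\sqrt n) = \mathcal{O}(n)$, which already suffices for the claim. Since the map is affine, this operator norm is exactly its Lipschitz constant, so the splitting function itself has Lipschitz constant $\mathcal{O}(n)$ (in fact $\mathcal{O}(1)$).

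Finally, to turn this into a statement about a \emph{ReLU} network I would invoke the elementary identity $z = \mathrm{ReLU}(z) - \mathrm{ReLU}(-z)$, applied componentwise, which shows that any affine map $x \mapsto Ax + b$ is exactly represented by a two-layer ReLU network whose weight matrices are assembled from $\pm A$ together with an identity block. Applying the standard product bound for the Lipschitz constant of a feed-forward ReLU network over its layer operator norms then yields $\omega_{NN_{I_n \to I^i_n}} = \mathcal{O}(n)$, as claimed. The only real work is (i) fixing the splitting rule precisely enough to exhibit its affine form and (ii) the norm bookkeeping above; neither is a genuine obstacle, since every map involved is linear and sparsely supported. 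The one mild subtlety worth checking is that the particular network realization chosen does not inflate the operator norm beyond a constant multiple of $\|A\|_2$, which the explicit two-layer construction handles directly.
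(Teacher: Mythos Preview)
Your proposal is correct and follows essentially the same approach as the paper: both write the splitting map explicitly as an affine function of $(\underline{d},\overline{d})$ obtained by dividing each axis into $k$ equal pieces, and then observe that the Lipschitz constant is the operator norm of the resulting constant Jacobian. Your treatment is in fact more detailed than the paper's---you carry out the $\|A\|_1,\|A\|_\infty,\|A\|_F$ bookkeeping to get $\mathcal{O}(1)$ (hence $\mathcal{O}(n)$) and add the ReLU realization paragraph---whereas the paper simply notes linearity, asserts the singular-value bound depends only on $k$ and $n$, and stops.
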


\begin{figure}
    \centering
    \includegraphics[width=0.5\columnwidth]{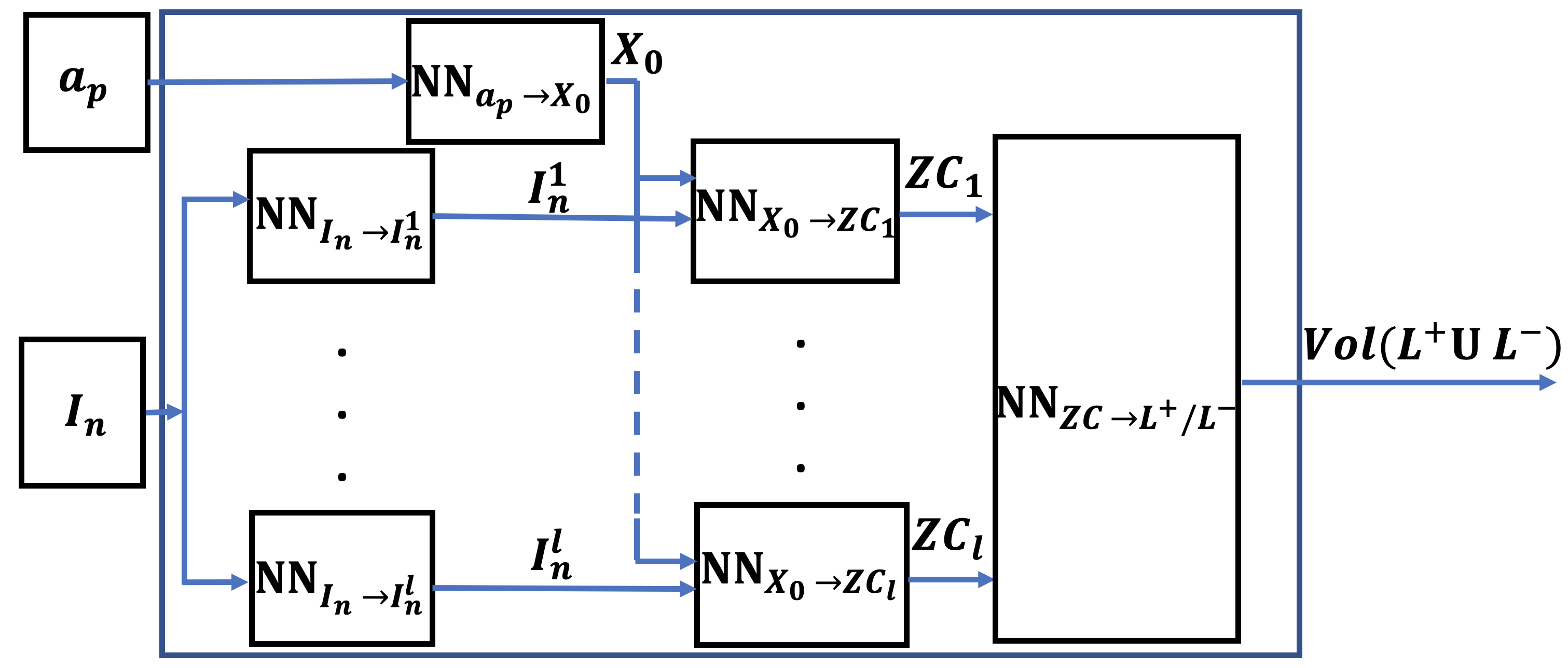}
    \caption{The architecture of the neural network $NN(a_p, I_n)$ used to prove the Theorem \ref{prop:lipNN}.}
    \label{3NN}
\end{figure}

\begin{proposition}
\label{prop:nn_indicator}
Consider the sub-neural network $NN_{X_0 \rightarrow ZC_i}\left(X_0, I^i_n\right)$ defined in~\eqref{eq:indicator}. The Lipschitz constant of $NN_{X_0 \rightarrow ZC_i}\left(X_0, I^i_n\right)$ is bounded by $\mathcal{O}(n_r)$.
\end{proposition}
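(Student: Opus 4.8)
In keeping with the existential strategy used for Theorem~\ref{prop:lipNN}, the plan is not to assemble $NN_{X_0 \rightarrow ZC_i}$ neuron by neuron, but to exhibit a continuous piecewise-linear (CPWL) map with the claimed Lipschitz bound and then invoke the fact that every CPWL map is realized exactly by a finite ReLU network (equivalently, Theorem~\ref{thm:errorBound} supplies an approximating ReLU network whose width and depth are governed by this Lipschitz constant). The semantic specification I would target is: $ZC_i$ outputs $1$ exactly when at least one root $x_0^j$ of $p$ lies in the $i$th sub-box $I_n(\underline{d}^i,\overline{d}^i)$ and $0$ otherwise, since this is precisely the information needed to decide whether the sign of $p$ can change inside that sub-box. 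Thus, up to an approximation tolerance that can be made arbitrarily small, the target is $ZC_i = s\!\left(\sum_{j=1}^{n_r} m_j\right)$, where each $m_j \in [0,1]$ is a smoothed box-membership score for the $j$th root and $s:\mathbb{R}\to[0,1]$ is a fixed saturating clamp that rounds the count to ``$\ge 1$''.

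First I would analyze a single membership block $m_j$. Box membership $\mathbf{1}[x_0^j \in I_n(\underline{d}^i,\overline{d}^i)]$ is the minimum of the $2n$ one-sided tests $\mathbf{1}[(x_0^j)_k\ge\underline{d}^i_k]$ and $\mathbf{1}[(x_0^j)_k\le\overline{d}^i_k]$, $k=1,\dots,n$. I would replace each hard test by a ReLU ramp with a fixed transition width $r$ equal to the (constant) sub-box size set by $NN_{I_n\to I^i_n}$ (see Proposition~\ref{prop:nn_split}), so every ramp has slope $1/r=\mathcal{O}(1)$, and I would realize the minimum via $\min(a,b)=a-\mathrm{ReLU}(a-b)$, which is $1$-Lipschitz and CPWL. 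The resulting $m_j$ is then CPWL with $\mathrm{Lip}(m_j)=\mathcal{O}(1)$ — crucially, also in the number of coordinates $n$ — both with respect to the root coordinates $x_0^j$ and with respect to the box parameters $(\underline{d}^i,\overline{d}^i)$; complex roots or roots far from the box give $m_j\equiv 0$ and are harmless.

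Then I would aggregate and isolate where the factor $n_r$ enters. The key observation is that all $n_r$ blocks read the \emph{same} box-parameter vector $(\underline{d}^i,\overline{d}^i)$: differentiating $g:=\sum_j m_j$, the part of the Jacobian acting on the (disjoint) root coordinates contributes only $\mathcal{O}(1)$ per block, whereas the part acting on $(\underline{d}^i,\overline{d}^i)$ is a sum of $n_r$ contributions of size $\mathcal{O}(1)$, hence of size $\mathcal{O}(n_r)$. Collecting the terms gives $\mathrm{Lip}(g)=\mathcal{O}(n_r)$, and post-composing with the $1$-Lipschitz clamp $s$ does not change this. Since $s\circ g$ is CPWL, the ReLU network computing it is the desired $NN_{X_0\to ZC_i}$, and its Lipschitz constant is $\mathcal{O}(n_r)$.

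The main obstacle I anticipate is careful norm bookkeeping: I must make sure that sharing the $2n$ box parameters across the $n_r$ blocks produces a factor $n_r$ rather than $n\,n_r$, which forces implementing the per-coordinate conjunction with $\min$ (Lipschitz-neutral in the number of coordinates) instead of a product, and measuring the Lipschitz constant in a norm in which this shared-input dependence is the dominant term. A secondary technicality is to quantify the error introduced by the finite-width ramps and by the rounding map $s$ (two near-boundary roots could spuriously sum to $1$) and to show it can be driven below any prescribed tolerance without inflating $1/r$ beyond $\mathcal{O}(1)$ — which is exactly where the fixed sub-box resolution guaranteed in Proposition~\ref{prop:nn_split} is used.
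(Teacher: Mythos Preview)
Your proposal is correct and reaches the same $\mathcal{O}(n_r)$ bound, but via a different CPWL construction than the paper. The paper does not build soft membership scores $m_j$ via a $\min$ of ReLU ramps; instead, for each root it writes the scalar ``outside--distance'' $d_j=\sum_{k}\bigl(|x^j_{0,k}-\underline{d}^i_k|+|x^j_{0,k}-\overline{d}^i_k|-(\overline{d}^i_k-\underline{d}^i_k)\bigr)$, which vanishes exactly when $x_0^j\in I_n^i$, and then aggregates across roots with a $\max_{j}$ followed by a threshold at zero. The $n_r$ factor there comes from the rule that the Lipschitz constant of a $\max$ of $n_r$ functions is bounded by the sum of their Lipschitz constants, whereas in your argument it arises because the $n_r$ membership blocks all read the \emph{shared} box parameters $(\underline{d}^i,\overline{d}^i)$ inside a sum. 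Your route is a bit more careful on two points the paper glosses over: you make explicit why the per-root block is Lipschitz-$\mathcal{O}(1)$ \emph{uniformly in $n$} (via $\min$ being Lipschitz-neutral in the number of conjuncts), and you flag the approximation issue introduced by the final clamp/threshold, which the paper's ``checking the final value against a constant'' step also incurs but does not discuss. Conversely, the paper's formula is shorter and avoids the secondary ramp-width bookkeeping you identify as an obstacle.
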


\begin{proposition}
\label{prop:nn_count}
Consider the sub-neural network $NN_{ZC \rightarrow L^{+}/L^{-}}$ defined in~\eqref{eq:count}. The Lipschitz constant of $NN_{ZC \rightarrow L^{+}/L^{-}}$ is bounded by $\mathcal{O}(1)$.

\end{proposition}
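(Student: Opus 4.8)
\textbf{Proof proposal for Proposition~\ref{prop:nn_count}.}
The plan is to recognize $NN_{ZC \rightarrow L^{+}/L^{-}}$ as a single fixed linear readout and to read its Lipschitz constant off directly. From the construction around~\eqref{eq:count}, this sub-network receives the vector of zero-crossing indicators $(ZC_1, \ldots, ZC_l)$, counts the sub-regions that contain no zero crossing, and scales that count by the common volume $V$ of a sub-region (recall from~\eqref{eq:split} that $I_n$ is partitioned into sub-regions of \emph{fixed} volume). Taking the $ZC_i$ to be the relaxed, continuous-valued indicators in $[0,1]$ emitted by the preceding sub-network, the output is the affine map
\begin{align*}
    NN_{ZC \rightarrow L^{+}/L^{-}}(ZC_1, \ldots, ZC_l) \;=\; V \sum_{i=1}^{l} ZC_i,
\end{align*}
i.e.\ the linear map $z \mapsto V\,\mathbf{1}^{\top} z$ with $z = (ZC_1, \ldots, ZC_l)^{\top} \in \mathbb{R}^l$ and $\mathbf{1} \in \mathbb{R}^l$ the all-ones vector. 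The crucial structural observation, which I would make first, is that every weight of this readout equals the fixed constant $V$ and that \emph{none} of the weights depend on the polynomial coefficients $a_p$, its roots, its condition number $\overline{C}_{a_p}$, or the ambient dimension $n$.

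First I would invoke the elementary fact that a linear map $z \mapsto w^{\top} z$ has Euclidean Lipschitz constant exactly $\norm{w}$ (an additive offset, should the count instead be phrased as $V\sum_i (1-ZC_i)$, leaves this unchanged). With $w = V\,\mathbf{1}$ this gives a Lipschitz constant of $\norm{V\mathbf{1}} = V\sqrt{l}$. Next I would argue that $V\sqrt{l}$ is $\mathcal{O}(1)$ in exactly the sense used throughout Theorem~\ref{prop:lipNN}: the discretization parameters $V$ and $l$ are fixed by the solver independently of the polynomial, so $V\sqrt{l}$ contains no factor of $n_r$ or $\overline{C}_{a_p}$ (and none of $n$); it is a pure constant of the construction. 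Using $V\,l = \mathrm{Vol}(I_n)$ one may even write $V\sqrt{l} = \mathrm{Vol}(I_n)/\sqrt{l}$ to make the bound explicit, but the essential point is simply the absence of any polynomial- or dimension-dependent factor.

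The main obstacle is not a computation but a modeling subtlety that I would settle at the outset. A \emph{hard} count of binary indicators is piecewise constant, hence has unbounded difference quotients at its jumps and is not Lipschitz at all; the statement is only meaningful once the inputs $ZC_i$ are interpreted as the continuous relaxations in $[0,1]$ that a ReLU network genuinely outputs, consistently with the realizability arguments underlying the sibling Propositions~\ref{prop:nn_estimate}--\ref{prop:nn_indicator}. A secondary point is to make the asymptotic notation precise: the bound holds uniformly over polynomials and dimensions because the only surviving parameters are the fixed $V$ and $l$. Once these two conventions are fixed, the estimate $V\sqrt{l} = \mathcal{O}(1)$ is immediate and closes the argument.
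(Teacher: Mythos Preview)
Your proposal is correct and follows essentially the same approach as the paper: both recognize $NN_{ZC \rightarrow L^{+}/L^{-}}$ as an affine map $v\sum_{i=1}^{l}(1-ZC_i)$ (you write $V\sum_i ZC_i$ but explicitly note the $(1-ZC_i)$ variant is equivalent up to a shift), and both conclude $\mathcal{O}(1)$ from the fact that $v$ and $l$ are fixed constants independent of the polynomial. Your treatment is in fact more careful than the paper's---you explicitly compute the Lipschitz constant as $V\sqrt{l}$ and address the continuous-relaxation subtlety---whereas the paper simply states the linear form and invokes that $l$ is constant.
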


\begin{proof}[Proof of Theorem~\ref{prop:lipNN}]

Consider the NN shown in Figure~\ref{3NN} and defined using equations~\eqref{eq:estimate}-\eqref{eq:count}. To bound the Lipschitz constant of $NN(a_p, I_n)$, we consider two sets of inputs $(a_p, I_n)$ and $(a'_p, I'_n)$ as follows:
\begin{align}\label{theo2_1}
\norm{NN(a'_p, I'_n) - NN(a_p, I_n)}_2 &= \left\Vert NN_{ZC \rightarrow L^{+}/L^{-}} \left(ZC_1(a'_p, I'^1_n), \ldots, ZC_l(a'_p, I'^l_n) \right) \right. \nonumber\\ 
& \left. \qquad \qquad - NN_{ZC \rightarrow L^{+}/L^{-}} \left(ZC_1(a_p, I^1_n), \ldots, ZC_l(a_p, I^l_n) \right) \right \Vert_2,  \\
& \leq \mathcal{O}(1) \norm{ \left(ZC_1(a'_p, I'^1_n) - ZC_1(a_p, I^1_n), \cdots,    ZC_l(a'_p, I'^l_n) - ZC_l(a_p, I^l_n)\right)}_2, \label{eq:apply_lip1}\\
& = \mathcal{O}(1) \bigg(\sum\limits_{i = 1}^{l} \norm{ZC_i(a'_p, I'^i_n) - ZC_i(a_p, I^i_n)}^2_2\bigg)^{\frac{1}{2}}. \label{eq:apply_lip}
\end{align}
%
where~\eqref{eq:apply_lip1} follows from Proposition~\ref{prop:nn_count}. Now, we upper bound $\norm{ZC_i(a'_p, I'^i_n) - ZC_i(a_p, I^i_n)}_2$ as follows:


\begin{align}\label{theo2_2}
 \norm{ZC_i(a'_p, I'^i_n) - ZC_i(a_p, I^i_n)}^2_2 
 &= \left\Vert NN_{X_0 \rightarrow ZC_i} \left( NN_{a_p \rightarrow X_0} (a'_p), NN_{I_n \rightarrow I^{i}_n} (I'_n)\right) \right. \nonumber \\
 &\left. \qquad \qquad \qquad  - NN_{X_0 \rightarrow ZC_i} \left( NN_{a_p \rightarrow X_0} (a_p), NN_{I_n \rightarrow I^{i}_n} (I_n)\right)\right \Vert^2_2 \\ 
 & \leq \mathcal{O}(n_r) \norm{ \left( NN_{a_p \rightarrow X_0} (a'_p) - NN_{a_p \rightarrow X_0} (a_p), NN_{I_n \rightarrow I^{i}_n} (I'_n) - NN_{I_n \rightarrow I^{i}_n} (I_n)    \right)  }^2_2, \label{eq:apply_lip2}\\
 & = \mathcal{O}(n_r) \left (\norm{ \left( NN_{a_p \rightarrow X_0} (a'_p) - NN_{a_p \rightarrow X_0} (a_p) \right) }^2_2 +  \norm{NN_{I_n \rightarrow I^{i}_n} (I'_n) - NN_{I_n \rightarrow I^{i}_n} (I_n)  }^2_2 \right), \nonumber \\
 & \leq \mathcal{O}(n_r) \left(\mathcal{O}(n_r \overline{C}_{a_p}) \norm{a'_p - a_p}^2_2 + \mathcal{O}(n) \norm{I'_n - I_n}^2_2 \right) \label{eq:apply_lip3} \\
 & = \mathcal{O}(n_r \overline{n}_r \overline{C}_{a_p}) \norm{(a'_p,I'_n) - (a_p,I_n)}_2^2. \label{eq:apply_lip4}
\end{align}
where~\eqref{eq:apply_lip2} follows from Proposition~\ref{prop:nn_indicator};~\eqref{eq:apply_lip3} follows from Propositions~\ref{prop:nn_split} and~\ref{prop:nn_estimate} along with the definition of $\overline{n}_r = \max(n, n_r)$. Substituting~\eqref{eq:apply_lip4} in~\eqref{eq:apply_lip} and noticing that $l$ is a constant that does not depend on $n$ yields:
%
\begin{align}\label{theo2_3}
  \norm{NN(a'_p, I'_n) - NN(a_p, I_n)}_2 \leq \mathcal{O}(n_r \overline{n}_r \overline{C}_{a_p}) \norm{(a'_p,I'_n) - (a_p,I_n)}_2,  
\end{align}
from which we conclude that the Lipschitz constant of $NN(a_p, I_n)$ is in the order of $\mathcal{O}(n_r \overline{n}_r \overline{C}_{a_p})$ which concludes the proof of Theorem~\ref{prop:lipNN}.

\end{proof}

It follows from Theorem~\ref{thm:errorBound} and Theorem~\ref{prop:lipNN} that the higher the condition number of a polynomial $\overline{C}_{a_p}$, the larger the network width and depth needed to estimate the volume of zero sub/super level sets with high accuracy. Unfortunately, the power basis representation of polynomials (i.e., representing the polynomial as a summation $\sum_{K \le L}a_K x^K$) is shown to be an unstable representation with extremely large condition numbers~\cite{faroukinumstab} which may necessitate neural networks with substantial architecture.

\subsection{Bernstein Polynomials: A Robust Representation of Polynomials}

Motivated by the challenge above, we seek a representation of polynomials that is more robust to changes in coefficients, i.e., we seek a representation in which the roots of the polynomial change slowly with changes in the coefficients (and hence smaller condition numbers $\overline{C}_{a_p}$ and a smaller NN to estimate the volume of the sub/super level sets). We start with the following definition.

\begin{definition}
Let $p\left(x\right) = \sum\limits_{K \leq L}^{} a_K x^K \in \mathbb{R}[x_1,\ldots,x_n]$ be a multivariate polynomial over a hyperrectangle $I_n (\underline{d}, \overline{d})$
and of a maximal degree $L = \left(l_1, \cdots, l_n\right) \in \mathbb{N}^n$. 
The polynomial: 
\begin{align}\label{bernpol}
B_{p, L}\left(x\right) &= \sum\limits_{K \leq L}^{} b_{K,L} Ber_{K, L}\left(x\right),
\end{align}
is called the Bernstein polynomial of $p$, where $Ber_{K, L}\left(x\right)$ and $b_{K, L}$ are called the Bernstein basis and Bernstein coefficients of $p$, respectively, and are defined as follows: 
\begin{align}\label{bernpolcoeff}
Ber_{K, L}\left(x\right) &= {L \choose K} x^{K}\left(1-x\right)^{L-K}, \qquad
b_{K, L} =\sum\limits_{J = (0,\ldots, 0)}^{K} \frac{{K \choose J}}{{L\choose J}}\left(\overline{d} - \underline{d}\right)^J \sum\limits_{I = J}^{L}{I \choose J} \underline{d}^{I - J} a_I.
\end{align}
\end{definition}


The Bernstein representation is known to be the most robust representation of polynomials which is captured by the next result~\cite{faroukinumstab}.
\begin{theorem}[Theorem~\cite{faroukinumstab}] \label{thm:bernStable}
The Bernstein basis is optimally stable, i.e. there exists no other basis with a condition number smaller than the condition number of the Bernstein coefficients $\overline{C}_{b_p}$, where $b_p = (b_{(0,0,\ldots,0), L}, \ldots, b_{(l_1, l_2, \ldots, l_n), L})$ is the vector of all the Bernstein coefficients of polynomial $p$. 
\end{theorem}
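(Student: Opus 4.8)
The plan is to recast ``optimal stability'' as a pointwise statement about root condition numbers and then recognize the Bernstein representation as the minimizer of the relevant functional, following Farouki and Rajan~\cite{faroukinumstab}. First I would fix an arbitrary basis $\{\psi_K\}_{K\le L}$ of the (finite-dimensional) space of polynomials of degree at most $L$ and write $p=\sum_{K}c_K\psi_K$. Following Definition~\ref{def:condition}, perturb each coefficient relatively, $\abs{\delta c_K/c_K}\le\epsilon$, and linearize at a simple root $x_0$: from $p(x_0+\delta x_0)+\sum_K\delta c_K\,\psi_K(x_0+\delta x_0)=0$ together with $p(x_0)=0$ one gets $\norm{\delta x_0}\le\epsilon\,\bigl(\sum_K\abs{c_K\,\psi_K(x_0)}\bigr)\big/\norm{\nabla p(x_0)}$, with the obvious modification (the first nonvanishing higher derivative replacing $\nabla p$) at a multiple root. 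Hence the per-root condition number is
$$ C_{\psi}(x_0)\;=\;\frac{\sum_K\abs{c_K\,\psi_K(x_0)}}{\norm{\nabla p(x_0)}}, $$
and the key observation is that the denominator is an intrinsic quantity of $p$ and $x_0$ that does not depend on the chosen basis. Comparing condition numbers of different representations of the same polynomial therefore reduces to comparing the numerators $\mu_{\psi}(x_0):=\sum_K\abs{c_K}\,\abs{\psi_K(x_0)}$, and the theorem becomes the assertion that the Bernstein basis minimizes $\mu(x_0)$ at every root $x_0$, over all bases.

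Next I would bring in the two structural properties of the Bernstein basis on $I_n(\underline d,\overline d)$ — equivalently on $[0,1]^n$ after the affine map already built into~\eqref{bernpolcoeff}: (a) nonnegativity, $Ber_{K,L}(x)\ge 0$ for all $x\in I_n$; and (b) partition of unity, $\sum_{K\le L}Ber_{K,L}(x)\equiv 1$. Property (a) lets one drop the absolute values inside the numerator, so $\mu_{Ber}(x_0)=\sum_K\abs{b_{K,L}}\,Ber_{K,L}(x_0)$ is itself the Bernstein form of a polynomial dominating $\abs{p}$ on $I_n$, and (b) normalizes it. The comparison step is then: for any competing basis $\{\psi_K\}$, introduce the change-of-basis matrix $M$ with $\psi_K=\sum_J M_{JK}\,Ber_{J,L}$, so $b_{J,L}=\sum_K M_{JK}c_K$, and establish $\mu_{Ber}(x_0)\le\mu_{\psi}(x_0)$. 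This inequality is the Farouki--Rajan estimate; it is not a consequence of the triangle inequality alone (which by itself points the wrong way), but relies on the total-positivity / normalized B-basis structure of $\{Ber_{K,L}\}$, which distinguishes the Bernstein basis among all nonnegative bases as the one of least ``internal variation.'' The multivariate case follows from the univariate one by tensorization, since both $Ber_{K,L}$ and the coefficient transform in~\eqref{bernpolcoeff} factor over the $n$ coordinates.

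Finally, taking the supremum over all roots $x_0\in\{x\mid p(x)=0\}$ of the inequality $C_{Ber}(x_0)\le C_{\psi}(x_0)$ gives $\overline{C}_{b_p}\le\overline{C}_{c}$ for the coefficient vector $c$ of $p$ in the basis $\{\psi_K\}$; since $\{\psi_K\}$ was arbitrary, no basis has a condition number smaller than that of the Bernstein coefficients, which is the claim. I expect the main obstacle to be precisely the comparison inequality $\mu_{Ber}(x_0)\le\mu_{\psi}(x_0)$: proving it rigorously needs the total-positivity structure of the Bernstein basis rather than mere nonnegativity, and pushing that argument cleanly through the tensor-product construction of the multivariate basis is the only substantive work; the multiple-root bookkeeping in the first step is routine but must be stated carefully so that $\overline{C}$ is well defined.
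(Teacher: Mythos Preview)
The paper does not prove this theorem at all: Theorem~\ref{thm:bernStable} is stated as a quotation of a known result from Farouki~\cite{faroukinumstab} and is used as a black box to justify feeding Bernstein coefficients (rather than power-basis coefficients) to the neural network. There is no proof in the paper to compare your proposal against.

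That said, your sketch is broadly faithful to the argument in the original Farouki--Rajan work. The reduction of the basis comparison to the numerator $\mu_\psi(x_0)=\sum_K\abs{c_K}\abs{\psi_K(x_0)}$ is exactly the right first move, and you are correct that nonnegativity plus partition of unity do not by themselves force $\mu_{Ber}\le\mu_\psi$; the actual mechanism in Farouki's proof is that if the change-of-basis matrix $M$ from $\{\psi_K\}$ to $\{Ber_{J,L}\}$ has all nonnegative entries, then the triangle inequality applied to $b_{J,L}=\sum_K M_{JK}c_K$ together with $Ber_{J,L}\ge 0$ gives the comparison directly, and the Bernstein basis is characterized as the unique (normalized, nonnegative) basis for which \emph{every} other nonnegative basis has a nonnegative transformation matrix into it. So ``total positivity'' is slightly more structure than is strictly needed --- what is used is that Bernstein is the minimal element in the partial order induced by nonnegative change of basis among nonnegative bases. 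Your identification of the comparison inequality as the only nontrivial step, and of tensorization as the route to the multivariate statement, is accurate; just be aware that the optimality is over nonnegative bases, not literally all bases, which is how the original theorem is phrased.
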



Theorems~\ref{thm:errorBound}-\ref{thm:bernStable} point to the optimal way of designing the targeted neural network. Such a neural network needs to take as input the Bernstein coefficients $b_{p}$ instead of the power basis coefficients $a_p$. To validate this conclusion, we report empirical evidence in Table~\ref{TabComBases}. In this numerical experiment, we trained two neural networks with the same exact architecture, using the same exact number of data points, and both networks have the same number of inputs. Both neural networks are trained to estimate whether a zero-crossing occurs in a region (recall from our analysis in Theorem~\ref{prop:lipNN} that the Lipschitz constant of this NN is equal to the condition number of the polynomial). The only difference is that one neural network is trained using power basis coefficients $a_p$ (column 3 of Table~\ref{TabComBases}) while the second is trained using Bernstein basis coefficients $b_{p}$ (column 5 of Table~\ref{TabComBases}). The coefficients are randomly generated via a uniform distribution between $-0.1$ and $0.1$, i.e., $\mathcal{U}\left(-0.1, 0.1\right)$. We generated 40000 training samples and 10000 validation samples for both bases. We evaluate the trained NN on three different benchmarks for the two bases. Each evaluation benchmark has 10000 samples. The results are summarized in Table~\ref{TabComBases}. As it can be seen from Table~\ref{TabComBases}, the NN trained with Bernstein coefficients generalizes better than the NN trained with power basis coefficients as reflected by the empirical ``Accuracy'' during evaluation. This empirical evidence matches our analysis in Theorem~\ref{prop:lipNN} along with the insights of Theorem~\ref{thm:errorBound} and Theorem~\ref{thm:bernStable}.


\begin{table}[t!]
\caption{Evaluation of three trained neural networks on three different benchmarks for the different polynomial basis. Each benchmark has 10000 samples. The coefficients of the polynomial within each basis are generated following a uniform distribution given in the table.} \vspace{-3mm}
\begin{adjustbox}{width=0.8\columnwidth,center}
\begin{tabular}{|c|c|c|c|c|c|c|c|}
    \hline
      Benchmark & Coefficients  &   \multicolumn{2}{c|}{Power Basis} & \multicolumn{2}{c|}{Bernstein Basis} & \multicolumn{2}{c|}{Reduced Bernstein Basis}  \\ 
     \cline{3-8}
     & & Accuracy & Overhead & Accuracy & Overhead & Accuracy & Overhead\\
    \hline
   1  & $\mathcal{U}\left(-0.1, 0.1\right)$ &  $46\%$ & 0 [s] & $91\%$  & 0.01 [s] & $82\%$ & 0.002 [s]\\
    \hline 
   2 & $\mathcal{U}\left(-0.5, 0.5\right)$  &  $32\%$ &  0 [s] & $87\%$& 0.03 [s] & $79\%$ & 0.005 [s]\\
     \hline 
   3  & $\mathcal{U}\left(-1, 1\right)$  &  $30\%$ & 0 [s] & $88\%$  & 0.04 [s] & $80\%$ & 0.007 [s]\\
      \hline
\end{tabular}
\end{adjustbox}
\label{TabComBases}
\end{table}

\section{Taming the Complexity of Computing Bernstein Coefficients}

In section 4, we concluded that Bernstein's representation has a smaller condition compared to other representations, which helps build a more efficient NN. Nevertheless, computing this representation adds a significant overhead even by using the most efficient algorithms to calculate these  coefficients~\cite{range4,berncomplex}.
%
For example, computing all the Bernstein coefficients of a $6^{th}$-dimensional polynomial with $7^{th}$ order using  Matrix method and Garloff’s methods~\cite{range4,berncomplex} require $1.1e07$ and $7.1e06$ summation and multiplication operations~\cite{berncomplex}.
To exacerbate the problem, the Bernstein coefficients depend on the region $I_n$ and need to be recomputed in every iteration of the abstraction refinement process. Reducing such overhead is the main focus of this section.

\subsection{Range Enclosure Property of Bernstein polynomials}

Given a multivariate polynomial $p\left(x\right)$ that is defined over the $n$-dimensional box $I_n(\underline{d}, \overline{d})$, we can bound the range of $p\left(x\right)$ over $I_n(\underline{d}, \overline{d})$ using the range enclosure property of Bernstein polynomials as follows:
\begin{theorem}[Theorem 2 
\cite{garloff}]\label{th1}
Let $p$ be a multivariate polynomial of degree $L$ over the $n$-dimensional box $I_n(\underline{d}, \overline{d})$ with Bernstein coefficients $b_{K, L}$, $0 \leq K \leq L$. Then, for all $x \in I_n$, the following inequality holds:
\begin{align}\label{bernbound}
    \min_{K \leq L}~b_{K, L} \leq p\left(x\right) \leq \max_{K \leq L}~b_{K, L}.
\end{align}
\end{theorem}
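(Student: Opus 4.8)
The plan is to derive~\eqref{bernbound} from two elementary properties of the tensor-product Bernstein basis --- \emph{nonnegativity} and \emph{partition of unity} --- after reducing the statement to the unit box via an affine reparametrization. First I would set up the reduction. Let $\psi\colon[0,1]^n\to I_n(\underline d,\overline d)$ be the affine bijection acting coordinatewise by $\psi(t)_i=\underline d_i+(\overline d_i-\underline d_i)\,t_i$. Substituting $x=\psi(t)$ into $p(x)=\sum_{I\le L}a_I x^I$, expanding each resulting power by the one-dimensional binomial theorem, and collecting terms by multi-index produces precisely the coefficients appearing in~\eqref{bernpolcoeff}; carrying out this routine bookkeeping confirms
\[
 p(\psi(t)) \;=\; \sum_{K\le L} b_{K,L}\,Ber_{K,L}(t), \qquad \forall\, t\in[0,1]^n .
\]
Since $\psi$ maps $[0,1]^n$ bijectively onto $I_n(\underline d,\overline d)$, establishing~\eqref{bernbound} on $I_n$ reduces to bounding $\sum_{K\le L} b_{K,L}\,Ber_{K,L}(t)$ uniformly over $t\in[0,1]^n$.

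Next I would record the two basis identities. For $t\in[0,1]^n$ and any $K\le L$, the value $Ber_{K,L}(t)=\binom{L}{K}\prod_{i=1}^{n}t_i^{k_i}(1-t_i)^{l_i-k_i}$ is a product of nonnegative reals, hence $Ber_{K,L}(t)\ge 0$; and by the binomial theorem $\sum_{k_i=0}^{l_i}\binom{l_i}{k_i}t_i^{k_i}(1-t_i)^{l_i-k_i}=(t_i+(1-t_i))^{l_i}=1$ for each $i$, so multiplying these $n$ identities and using the multi-index conventions gives the partition of unity $\sum_{K\le L}Ber_{K,L}(t)=1$ for all $t\in[0,1]^n$. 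The bound then follows by a sandwich: writing $\underline b=\min_{K\le L}b_{K,L}$ and $\overline b=\max_{K\le L}b_{K,L}$, nonnegativity gives $b_{K,L}\,Ber_{K,L}(t)\le\overline b\,Ber_{K,L}(t)$ termwise, so
\[
 p(\psi(t)) = \sum_{K\le L} b_{K,L}\,Ber_{K,L}(t) \;\le\; \overline b\sum_{K\le L}Ber_{K,L}(t) = \overline b ,
\]
and the identical argument with $\underline b$ yields $p(\psi(t))\ge\underline b$. Rewriting with $x=\psi(t)\in I_n$ gives~\eqref{bernbound}.

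The only genuine work is the first step: verifying that the transformation formula~\eqref{bernpolcoeff} for the $b_{K,L}$ really does express $p$ in the Bernstein basis over the box $I_n$, i.e.\ the multi-index binomial bookkeeping through the affine change of variables, which is where an off-by-one would hide; the nonnegativity, partition-of-unity, and sandwich steps are immediate. If one prefers to avoid the explicit reparametrization, the same proof runs verbatim with the shifted and scaled Bernstein basis $\binom{L}{K}(x-\underline d)^K(\overline d-x)^{L-K}/(\overline d-\underline d)^L$ used directly on $I_n$, since it is again nonnegative on $I_n$ and sums to $1$ there.
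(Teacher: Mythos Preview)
The paper does not supply its own proof of this theorem: it is quoted verbatim as ``Theorem~2~\cite{garloff}'' and used as a black-box range-enclosure property, so there is nothing in the paper to compare against. Your argument is the standard and correct one --- reduce to the unit cube by the affine change of variables implicit in~\eqref{bernpolcoeff}, then invoke nonnegativity and partition of unity of the tensor-product Bernstein basis to sandwich $p$ between $\min_K b_{K,L}$ and $\max_K b_{K,L}$.
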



The traditional approach to computing the range enclosure of $p$ is to compute all the Bernstein coefficients of $p$ to determine their minimum and maximum~\cite{range1,range2,range3}. However, computing all the coefficients has a complexity of $\mathcal{O}\left(\left(l_{max} + 1\right)^n\right)$, where $l_{max} = \max\limits_{1 \leq i \leq n}l_i$, which increases exponentially with the dimension $n$. 
Luckily, the Bernstein coefficients enjoy monotonicity properties, whenever the region $I_n(\underline{d}, \overline{d})$ is restricted to be an orthant (i.e., the sign of $x_i$ does not change within $I_n(\underline{d}, \overline{d})$, for each $i \in \{1,\ldots, n\}$)~\cite{range4}. Using such monotonicity properties, one can compute the minimum and maximum Bernstein coefficients (denoted by $\underline{B}_{p, L}$, $\overline{B}_{p, L}$) with a time complexity of $\mathcal{O}\left(2\left(l_{max} + 1\right)^2\right)$ which does not depend on the dimension $n$.

 \subsection{Zero Crossing Estimation using only a few Bernstein Coefficients}
 Now we discuss how to use the range enclosure property above to reduce the number of computed Bernstein coefficients. First, we note that the zero crossing of a polynomial $p$ in a given input region $I_n$ depends on its estimate range given by $\underline{B}_{p, L}$ and $\overline{B}_{p, L}$. More specifically, if $\underline{B}_{p, L} > 0$ ($\overline{B}_{p, L} < 0$),  then the entire polynomial is positive (negative), which means that there is no zero-crossing. If $\underline{B}_{p, L}$ and $\overline{B}_{p, L}$ have different signs, and because of the estimation error of these bounds, the polynomial $p$ may still be positive, negative, or have a zero crossing in the region. In this case, we need additional information such as the bounds of the gradient of the polynomial $p$ within the input region, that are given by $\underline{B}_{\nabla p, L}$ and $\overline{B}_{\nabla p, L}$ (which can be computed efficiently thanks to the fact that gradients of polynomials are polynomials themselves). Such additional information about the worst-case gradient of the polynomial leads to a natural estimate of whether a zero crossing occurs in a region. 
 
 
Due to space constraints, we omit the analysis of bounding the estimation error introduced by relying only on the maximum and minimum of the polynomial $\underline{B}_{p, L}$ and $\overline{B}_{p, L}$ along with the maximum and minimum of the gradient $\underline{B}_{\nabla p, L}$ and $\overline{B}_{\nabla p, L}$. Instead, we support our claim using the empirical evidence shown in Table~\ref{TabComBases}. Using the same benchmarks used in Section 4.2, we train a third neural network that takes as input only the four inputs $\underline{B}_{p, L}, \overline{B}_{p, L}, \underline{B}_{\nabla p, L}, \overline{B}_{\nabla p, L}$ and compare its generalization performance (column 7 of Table~\ref{TabComBases}). As shown in the table, the third neural network sacrifices some accuracy compared to the ones that use all Bernstein coefficients. But on the other side, it reduces the overhead to compute the Bernstein coefficients by order of magnitude as can be seen by comparing the ``execution overhead'' reported in columns 4, 6, and 8 for the power basis, the Bernstein basis, and the reduced Bernstein basis, respectively.

\subsection{Search Space pruning using Bernstein Coefficients}
The range enclosure property and the discussion above open the door for a natural solution of the ``alternative abstraction'' problem mentioned in Section 3.2. The maximum and minimum Bernstein coefficients can be used as an abstraction (in addition to convex upper and lower bounds) of high-order polynomials. Such abstractions can be refined with every iteration of the solver. They can be used to identify portions of the search space for which one of the polynomials is guaranteed to be positive (and hence a solution does not exist). More details about integrating this abstraction and the convex abstraction are given in the implementation section below.

\section{Algorithm Architecture and Implementation Details}


\begin{algorithm}[t!]
\caption{PolyARBerNN}  \label{alg:PolyARBerNN}
\begin{flushleft}
\textbf{Input:} $I_n(\underline{d}, \overline{d}), p_1, p_2, \ldots, p_m, \epsilon$, $\qquad$
\textbf{Output}: $x_{\text{Sol}}$
\end{flushleft}
\begin{algorithmic}[1]
\STATE $orthants := \texttt{Partition\_Region}(I_n)$
\STATE $Neg := \{\}$
\STATE $Ambig := \{orthants\}$
\STATE $\text{List\_pols} := \{p_1,\ldots,p_m\}$
\WHILE{$\texttt{Compute\_Maximum\_Volume}(Ambig) \ge \epsilon$}
    \STATE $p :=\texttt{Select\_Poly}\left(\text{List\_pols}, Neg\right)$
    \STATE $region := \texttt{Remove\_Ambiguous\_Region\_From\_List}\left(Ambig\right)$
    \STATE $\big(\underline{B}_{p, L} ,\overline{B}_{p, L}, \underline{B}_{\nabla p, L}, \overline{B}_{\nabla p, L}\big)$ := $\texttt{Compute\_Bern\_Coeff}(p, region)$\\
    \IF{$\underline{B}_{p, L} > 0$} 
        \STATE \textbf{break}
    \ELSIF{$\underline{B}_{p, L} < 0$}
        \STATE $Neg := Neg \cup (p, L^{-}_{0}\left(p\right))$
        \STATE \textbf{break}
    \ENDIF
    \STATE $\text{(under\_approx, over\_approx, split)} := NN(\underline{B}_{p, L} ,\overline{B}_{p, L}, \underline{B}_{\nabla p, L}, \overline{B}_{\nabla p, L}, region\big)$
    \STATE $action := \texttt{Select\_best\_action}(\text{under\_approx, over\_approx, split})$
    \STATE $L^{-}_{0}\left(p\right),L^{+}_{0}\left(p\right),L^{+/-}_{0}\left(p\right):= \texttt{Convex\_Abst\_Refin\_PolyAR}\left(p, action, region\right)$
    \STATE $Ambig := Amibg \cup L^{+/-}_{0}\left(p\right)$
    \STATE $Neg := Neg \cup (p, L^{-}_{0}\left(p\right))$
\ENDWHILE
\IF{$\texttt{is\_List\_Empty}(Ambig)$}
    \IF{A negative region in $Neg$ has all the polynomials}
        \STATE $x_{\text{Sol}}:=$ any point in the negative region
    \ELSE
    \RETURN $\text{the problem is UNSAT}$
    \ENDIF
\ELSE
\STATE $x_{\text{Sol}}:=\texttt{CAD\_Solver\_Parallel}\left(Ambig,p_1, \ldots, p_m\right)$
\ENDIF

  

    


     
         
\end{algorithmic}
\end{algorithm}

In this section, we describe the implementation details of our solver PolyARBerNN. As a pre-processing step, the tool divides the input region $I_n$ into several regions such that each one is an orthant. This allows the tool to process each orthant in parallel or sequentially. The tool keeps track of all regions for which the sign of a polynomial is not fixed. These regions are called ambiguous regions, and they are stored in a list called $Ambig$. As long as the volume of the regions in this list is larger than a user-defined threshold $\epsilon$, then our tool will continuously use abstractions to identify portions in which one of the polynomials is always positive (and hence removed from the search space) or negative (and hence the tool will give higher priority for this region). The abstraction refinement is iteratively applied in Lines 5-17 of Algorithm 1. In each abstraction refinement step, the tool picks a polynomial $p$ and a region $region$ based on several heuristics (Lines 6-7). In lines 8-14, we compute the maximum/minimum Bernstein coefficients followed by checking the sign of the polynomial within this region. Suppose the Bernstein coefficients indicate that the polynomial is always positive in this region. In that case, this provides a guarantee that a solution does not exist in this region (recall that Problem 1 searches for a point where \emph{all} polynomials are negative). Similarly, if the polynomial is always negative, then it will be added to the list of negative regions. For those polynomials for which the Bernstein abstraction failed to identify their signs, we query the trained neural network to estimate the best convex abstraction possible (Lines 15-16). Based on the neural network suggestion, we use the PolyAR tool~\cite{polyar} to compute the convex abstraction (Line 17), which returns portions of this region that are guaranteed to belong to the zero sublevel set $L^-_0(p)$, those who belong to the zero superlevel set $L^+_0(p)$, and those remain ambiguous $L^{+/-}_0(p)$. The process of using Bernstein abstraction and the convex abstraction (which is guided by the trained neural network) continues until all remaining ambiguous regions are smaller than a user-defined threshold $\epsilon$ in which case it will be processed in parallel using a sound and complete tool that implements Cylindrical Algebraic Decomposition (CAD) such as Z3 and Yices (line 28 in Algorithm 1).

\begin{figure}[!t]
	\includegraphics[width=0.7\columnwidth]{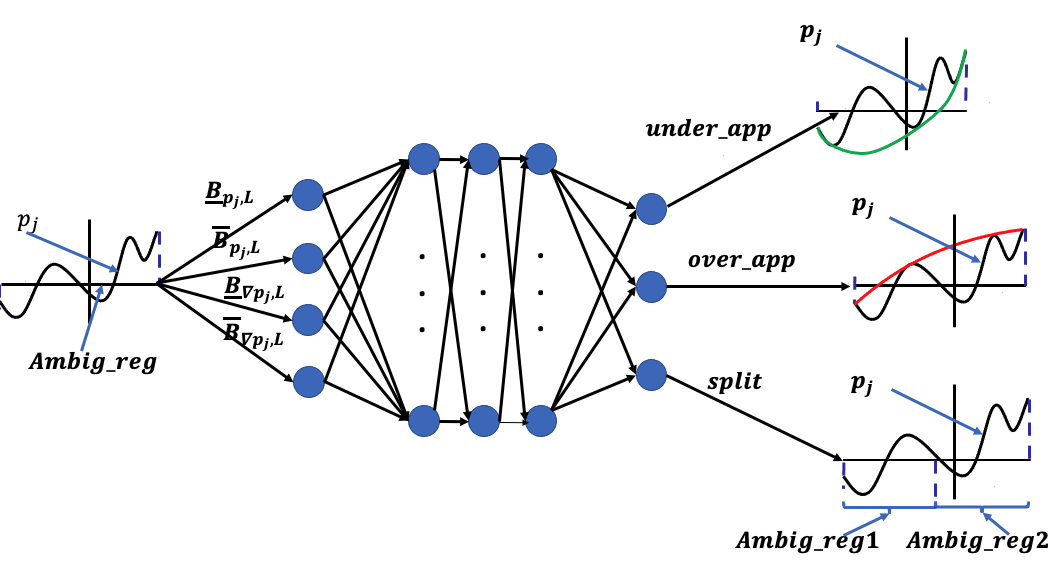} 
	\vspace{-3mm}
	\caption{The architecture of the trained NN that is used to guide the abstraction refinement process within PolyARBerNN.
	We used a fully connected NN that contains an input layer with 4 neurons, three hidden layers with 40 neurons each, and one output layer with three neurons. All neurons are ReLU-based except for the output neurons which uses SoftMax non-linearity.}
	\label{fig:AbstrefinNN}
	\vspace{-5mm}
\end{figure}

The neural network itself is trained using randomly generated, quadratic, two-dimensional polynomials where the coefficients follow a uniform distribution between $-1$ and $1$. For each randomly generated polynomial, we used PolyAR to compute the volumes of the $L^+_0(p), L^-_0(p), L^{+/-}_0(p)$ regions. We use a fully connected NN that contains an input layer, three hidden layers, and one output layer (shown in Figure~\ref{fig:AbstrefinNN}). The input layer has four neurons, the hidden layers have 40 neurons each, and the output layer has three neurons. We use a dropout of probability $0.5$ in the first and second hidden layers to avoid overfitting. We use the ReLU activation function for all the hidden layers and the Softmax activation function for the output layer. We use Adam as an optimizer and cross-entropy as a loss function. Although the neural network is trained on simple quadratic two-dimensional polynomials, we observed it generalizes well to higher-order polynomials with several variables. This will become apparent during the numerical evaluation in which polynomials of different orders and several variables will be used to evaluate the tool.

~\\
\noindent\textbf{Correctness Guarantees:}
We conclude our discussion with the following result which captures the correctness guarantees of the proposed tool:
\begin{theorem}
The PolyARBerNN solver is sound and complete.
\end{theorem}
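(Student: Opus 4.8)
The plan is to establish soundness and completeness by a decomposition argument: show that every operation performed by Algorithm~\ref{alg:PolyARBerNN} either (i) preserves the solution set exactly, or (ii) is a sound under-/over-approximation that is resolved later by a sound and complete subroutine. Concretely, I would argue that PolyARBerNN is a ``wrapper'' that never changes the answer relative to the CAD solver it eventually calls, and that it always terminates and reaches such a call.

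First I would set up the invariant. Let $S^\star = \{x \in I_n(\underline{d},\overline{d}) : p_i(x) \le 0 \text{ for all } i\}$ be the true solution set. I claim that throughout the \texttt{WHILE} loop the following holds: $S^\star \subseteq \bigcup_{R \in Ambig} R \;\cup\; \bigcup_{(p,L_0^-(p)) \in Neg}\big(\text{region tagged in } Neg\big)$, and moreover no point of $S^\star$ is ever permanently discarded. This is proved by induction on loop iterations. The base case is Line~3, where $Ambig$ is the full orthant partition of $I_n$, so the invariant holds with equality. For the inductive step I examine each branch: (a) when $\underline{B}_{p,L} > 0$ (Lines 9--10), Theorem~\ref{th1} guarantees $p(x) \ge \underline{B}_{p,L} > 0$ on the whole region, so $p$ is strictly positive there and the region contains no point of $S^\star$ — discarding it is sound; (b) when $\underline{B}_{p,L} < 0$ (Lines 11--13), again by Theorem~\ref{th1} only a \emph{sublevel} conclusion could be drawn, and the region is moved to $Neg$, not discarded, so no solution is lost; (c) in the ambiguous case (Lines 15--19), the call to \texttt{Convex\_Abst\_Refin\_PolyAR} returns a partition of $region$ into $L_0^-(p)$, $L_0^+(p)$, $L_0^{+/-}(p)$, and by the containments established in Section~3 (namely $L_0^+(U_j^{p_i}) \subseteq L_0^+(p_i)$ and $L_0^-(O_j^{p_i}) \subseteq L_0^-(p_i)$) the piece labelled $L_0^+(p)$ is genuinely inside the superlevel set of $p$ and hence solution-free, the piece labelled $L_0^-(p)$ genuinely satisfies $p \le 0$, and the residual $L_0^{+/-}(p)$ is re-added to $Ambig$. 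In every branch the invariant is maintained. Crucially, the role of the neural network in Line~16 is only to \emph{choose} an action; any choice it makes feeds into \texttt{Convex\_Abst\_Refin\_PolyAR}, which is sound regardless of which abstraction is selected, so the NN cannot compromise correctness.

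Next I would handle termination and the final dispatch. Termination of the \texttt{WHILE} loop follows from the loop guard \texttt{Compute\_Maximum\_Volume}$(Ambig) \ge \epsilon$ together with the fact that each refinement of an ambiguous region (split, or convex pruning that carves off positive/negative pieces) strictly decreases total ambiguous volume or region count, so after finitely many steps every region in $Ambig$ has volume below $\epsilon$ (this is exactly the termination argument inherited from PolyAR~\cite{polyar}, which I would cite). Upon exit there are two cases. If $Ambig$ is empty (Lines 21--26): by the invariant, $S^\star$ is contained in the union of the regions recorded in $Neg$; a point of $S^\star$ exists iff some recorded negative region witnesses $p_i \le 0$ \emph{for every} $i$ simultaneously — this is precisely the test in Line~22, so returning a point from such a region is sound, and returning UNSAT when no such region exists is complete (since the invariant guarantees nothing outside $Neg$ could contain a solution). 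If $Ambig$ is nonempty (Lines 27--29): by the invariant $S^\star \subseteq \big(\bigcup_{R\in Ambig} R\big) \cup \big(\text{negative regions in } Neg\big)$, and on each remaining $R$ we invoke a sound and complete CAD-based solver (Z3/Yices) on the original constraints $p_1,\dots,p_m$; soundness of PolyARBerNN then follows from soundness of the CAD solver, and completeness follows because the union of the $Ambig$ regions together with the already-certified $Neg$ regions covers all of $S^\star$, so if a solution exists the CAD call on the relevant piece finds it (or the $Neg$ check already did).

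The main obstacle I anticipate is making the invariant airtight across the interaction between the two abstraction mechanisms and the bookkeeping: in particular, verifying that \texttt{Convex\_Abst\_Refin\_PolyAR} really returns a \emph{cover} of $region$ (not merely disjoint certified pieces with a gap), so that the residual $L_0^{+/-}(p)$ genuinely absorbs every point of $region$ not yet classified; and verifying that when a region is moved to $Neg$ for one polynomial $p$, it is still subsequently tested against the \emph{other} polynomials (otherwise completeness of the Line~22 ``has all the polynomials'' check could fail). Both points reduce to carefully tracking which polynomial each region in $Neg$/$Ambig$ has been resolved against; once that bookkeeping is pinned down, the soundness and completeness of the constituent pieces — Theorem~\ref{th1} for the Bernstein bounds, the PolyAR containments for the convex abstractions, and the CAD solver for the leaves — compose directly to give the claim.
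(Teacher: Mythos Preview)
Your proposal is correct and follows essentially the same approach as the paper, only far more thoroughly: the paper's own proof is a three-sentence sketch stating that (i) pruning uses only sound abstractions (Bernstein bounds and convex upper bounds), (ii) the neural network and convex lower bounds serve purely as heuristics to guide refinement, and (iii) the remaining regions are handled by a sound and complete CAD-based solver. Your invariant-maintenance argument, termination discussion, and case analysis on the final dispatch simply unpack these three points in detail; the bookkeeping concerns you flag at the end are real but are not addressed in the paper either.
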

\begin{proof}
This result follows from the fact that search space is pruned using sound abstractions (convex upper bounds or Bernstein-based). The neural network and the convex lower bound polynomials are just used as heuristics to guide the refinement process. Finally, CAD-based algorithms (which are sound and complete) are used to process the portions of the search space which are not pruned by the abstraction refinement.
\end{proof}

\section{Generalization to polynomial optimization problems:}



In this section, we focus on providing a solution to Problem 2. Our approach is to turn the optimization problem (Problem 2) into a feasibility problem (Problem 1). First, we recall that the gradient of $p$, $\nabla p = [\frac{\partial p}{\partial x_1}, \cdots{}, \frac{\partial p}{\partial x_n} ]$, where $\frac{\partial p}{\partial x_i}$ is the partial derivative of $p$ with respect to $x_i$, is a vector of $n$ polynomials. The optimal value of $p$ occurs either (i) when the vector of partial derivatives are all equal to zero or (ii) at the boundaries of the input region.


To find the critical points $x^{*}$ of $p$ where $\nabla p \left(x^{*}\right) = 0$, we add the $n$ polynomial constraints $\frac{\partial p}{\partial x_i} \le 0, 1\leq i \leq n $ and $- \frac{\partial p}{\partial x_i} \le 0, 1\leq i \leq n $ to the constraints of the optimization problem. Now, we modify the PolyARBernNN solver to output  \emph{all} possible regions in which all the constraints are satisfied. This can be easily computed by taking the intersections within the regions stored in the data structure $Neg$ in Algorithm 1. These regions enjoy the property that \emph{all} points in these regions are critical points of $p$. In addition, we modify PolyARBernNN to output \emph{all} the remaining ambiguous regions whose volumes are smaller than the user-specified threshold $\epsilon$ and for which the CAD-based solvers returned a solution. These regions enjoy the property that \emph{there exists} a point inside these regions which is a critical point. These modifications are captured in (Line 2 of Algorithm 2).



Since the minimum/maximum of $p$ may occur at the boundaries of the region $I_n\left(\underline{d}, \overline{d}\right)$, our solver samples from the boundaries of the region $I_n\left(\underline{d}, \overline{d}\right)$ (Line 4 in Algorithm 2). The solver uses $2 \sqrt{n} (\epsilon)^{1/n}$ as sampling distance between two successive boundary samples---recall $\epsilon$ is a user-defined parameter and was used in Algorithm 1 as a threshold on the refinement process. Finally, we evaluate the polynomial $p$ in the obtained samples and we take the minimum and maximum over the obtained values (Line 7 in Algorithm 2). All the details can be found in Algorithm 2. 

\begin{algorithm}[t!]
\caption{PolyAROpt}  \label{alg:PolyAROpt}
\begin{flushleft}
\textbf{Input:} $I_n(\underline{d}, \overline{d}), p, p_1, p_2, \ldots, p_m, \epsilon$\\
\textbf{Output}: $\hat{p}_{min}, \hat{p}_{max}$
\end{flushleft}
\begin{algorithmic}[1]
\STATE $\nabla p = \texttt{Grad\_Poly}(p)$
\STATE $\hat{reg}_{\text{list}} = \texttt{PolyARBerNN}(I_n(\underline{d}, \overline{d}), \nabla p, p_1, \ldots, p_m, \epsilon)$
\STATE $\hat{x}_{\text{list}} = \texttt{center}(\hat{reg}_{\text{list}})$
\STATE $x^{end}_{list} = \texttt{Sample\_boundaries}(I_n(\underline{d}, \overline{d}), \epsilon)$
\STATE $\hat{p}_{list} = p(\hat{x}_{\text{list}})$; $\quad p^{end}_{list} = p(x^{end}_{\text{list}})$
\STATE $p_{list} = \hat{p}_{list} \cup p^{end}_{list}$
\STATE $\hat{x}_{min} = \arg \min (p_{list}$); $\quad \hat{x}_{max} = \arg \max (p_{list})$
\STATE $\hat{p}_{min} = p(\hat{x}_{min})$; $\quad \hat{p}_{max} =  p(\hat{x}_{max})$
\end{algorithmic}
\end{algorithm}

We conclude our discussion with the following result which captures the error between the solutions provided by PolyAROpt and the global optima.

\begin{theorem}
Let $p^{*}_{min}$ and $p^{*}_{max}$ be the global optimal points for the solution of Problem 2. The solution obtained by Algorithm 2, denoted by $\hat{p}_{min}$ and $\hat{p}_{max}$ satisfies the following:
\begin{align}
&\norm{\hat{p}_{min} - p^{*}_{min}}\leq 2 \omega_p\sqrt{n}(\epsilon)^{1/n},   
& \norm{\hat{p}_{max} - p^{*}_{max}}\leq 2 \omega_p\sqrt{n}(\epsilon)^{1/n} \label{eq:max_bound}
\end{align}
where $\omega_p$ is the Lipschitz constant of the polynomial $p$ and $\epsilon > 0$ is a user-defined error.
\end{theorem}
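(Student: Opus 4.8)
The plan is to decompose $\norm{\hat p_{min}-p^*_{min}}$ into a spatial part and a functional part: first show that Algorithm~2 evaluates $p$ at some point $\hat x$ lying within Euclidean distance $2\sqrt{n}(\epsilon)^{1/n}$ of a true global minimizer $x^*_{min}$, and then convert this proximity into the stated error bound using the fact that $p$ is $\omega_p$-Lipschitz. The argument for $\hat p_{max}$ is the mirror image, so I would prove only the minimizer case in detail.

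First I would record that the feasible set is compact (a closed subset of the hyperrectangle $I_n(\underline{d},\overline{d})$), so the optimum is attained at some feasible $x^*_{min}$ with $p(x^*_{min})=p^*_{min}$. I would then split into two cases. If $x^*_{min}$ lies in the interior of $I_n$, then $\nabla p(x^*_{min})=0$, so $x^*_{min}$ satisfies every constraint fed to PolyARBerNN in Line~2 of Algorithm~2 (the gradient constraints $\pm\tfrac{\partial p}{\partial x_i}\le 0$ together with $p_1\le 0,\dots,p_m\le 0$); hence $x^*_{min}$ belongs to one of the regions $\hat{reg}$ returned in $\hat{reg}_{\text{list}}$. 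By the stopping rule of Algorithm~1, every region handed back has volume at most $\epsilon$ and is a sub-box of a refined orthant, so its diameter is $\mathcal{O}(\sqrt{n}(\epsilon)^{1/n})$ and its center $\hat x\in\hat x_{\text{list}}$ is within $\sqrt{n}(\epsilon)^{1/n}$ of $x^*_{min}$. If instead $x^*_{min}\in\partial I_n$, then since Line~4 samples $\partial I_n$ with sampling distance $2\sqrt{n}(\epsilon)^{1/n}$, some boundary sample $x^{end}\in x^{end}_{\text{list}}$ is within $2\sqrt{n}(\epsilon)^{1/n}$ of $x^*_{min}$. Either way there is an evaluated candidate $\hat x$ with $\norm{\hat x-x^*_{min}}\le 2\sqrt{n}(\epsilon)^{1/n}$.

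The remaining step is purely Lipschitz bookkeeping. Since $\hat x$ is one of the points over which Algorithm~2 minimizes, $\hat p_{min}=p(\hat x_{min})\le p(\hat x)\le p(x^*_{min})+\omega_p\norm{\hat x-x^*_{min}}\le p^*_{min}+2\omega_p\sqrt{n}(\epsilon)^{1/n}$; and because every evaluated candidate is either feasible (the centers of the $Neg$ regions and of the ambiguous regions certified by the CAD solver) or can be pushed to a feasible point within the same distance, one also gets $\hat p_{min}\ge p^*_{min}-2\omega_p\sqrt{n}(\epsilon)^{1/n}$ (and in fact $\hat p_{min}\ge p^*_{min}$ if evaluation is restricted to feasible samples). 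Combining the two inequalities gives $\norm{\hat p_{min}-p^*_{min}}\le 2\omega_p\sqrt{n}(\epsilon)^{1/n}$, and repeating the argument with maximization in place of minimization gives the bound for $\hat p_{max}$.

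I expect the genuine difficulty to lie entirely in the covering step, not the Lipschitz step. Two things need to be pinned down: that the refinement inside PolyARBerNN keeps the returned regions from becoming arbitrarily elongated, so that a bound of $\epsilon$ on the \emph{volume} really forces a bound of order $\sqrt{n}(\epsilon)^{1/n}$ on the \emph{diameter} (a needle-shaped box of volume $\epsilon$ would defeat the argument); and that global optima sitting on the constraint boundary $\{p_i=0\}$ rather than on $\partial I_n$ or at a critical point of $p$ are still captured — either by a KKT-type enlargement of the constraint set or by an additional sampling argument on $\{p_i=0\}$. The precise constant ($2$ rather than some other $\mathcal{O}(\sqrt{n})$ factor) is then only a matter of how loosely one estimates the grid covering radius in $n-1$ dimensions and the region diameters, so I would not expect that to cause trouble once the two structural points above are settled.
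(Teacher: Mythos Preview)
Your approach is essentially the same as the paper's: both split according to where the evaluated candidate comes from (centers of \texttt{Neg} regions, centers of \texttt{Ambig} regions, or boundary samples), bound $\norm{\hat x - x^*}$ by $2\sqrt{n}\,\epsilon^{1/n}$ in each case, and then apply Lipschitz continuity of $p$. The two structural concerns you flag at the end---that a volume bound need not give a diameter bound for elongated boxes, and that optima on the constraint surface $\{p_i=0\}$ are not obviously covered---are not addressed in the paper's proof either; it simply asserts the volume-to-diameter conversion and does not treat the constrained-boundary case, so your proposal is at least as complete as the original.
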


\begin{proof}
We note that there are three cases that Algorithm 2 uses to compute the critical points, $\hat{x}_{min}$ and $\hat{x}_{max}$, which corresponds to the approximation of the minimum $\hat{p}_{min}$ and maximum $\hat{p}_{max}$ of the polynomial:
\begin{enumerate}
    \item Using the center of the regions in the $Neg$ list
    \item Using the center of the regions in the $Ambig$ list
    \item Using samples from the boundaries
\end{enumerate}
We proceed by case analysis. \textbf{Case 1:} First, we note that \emph{all} the points within the $Neg$ regions satisfy that $\nabla p = 0$ and hence the value of the polynomial takes the same exact value overall the region, hence the value of $p$ at the center $\hat{x}$ of the region is the same at the global optima $x^*$. \textbf{Case 2 and Case 3:}
If we can show that $\hat{x}_{min}$ and $\hat{x}_{max}$ are bounded from the actual optimal points $x^*_{min}$ and $x^*_{max}$ by:
\begin{align}
&\norm{\hat{x}_{min} - x^{*}_{min}}\leq 2 \sqrt{n}(\epsilon)^{1/n}  
& \norm{\hat{x}_{max} - x^{*}_{max}}\leq 2 \sqrt{n}(\epsilon)^{1/n} \label{eq:x_max_bound}
\end{align}
then~\eqref{eq:max_bound} will follow directly from the definition of the Lipschitz continuity of polynomials. To show that inequalities~\eqref{eq:x_max_bound} hold we proceed by case analysis. However~\eqref{eq:x_max_bound} follow directly in Case 2 from the fact that $Ambig$ regions have a volume that is smaller than $\epsilon$ (Line 7 in Algorithm 1) and hence the distance between any two points within the regions is bounded by $2 \sqrt{n}(\epsilon)^{1/n}$. Similarly, Case 3 follows from the fact that Algorithm 2 samples from the boundaries with a maximum distance between the samples that is equal to $2 \sqrt{n}(\epsilon)^{1/n}$.

\end{proof}

\section{Numerical Results - NN Training }
In this section, we show the details of training and evaluating the NN used to help PolyARBerNN selecting the best convex abstraction. We evaluate the trained NN on six different benchmarks. The benchmarks are different than the training benchmarks with respect to the input region, the degree of the polynomial, and the number of variables of the polynomial. 
All the experiments were executed on an Intel Core i7 2.6-GHz processor with 16 GB of memory.

\subsection{Training data collection and pre-processing}

\subsubsection{Data collection}
To collect the data, we generated random quadratic two-dimensional polynomials: 
$$q\left(x_1,~x_2\right)~=~c_1x_1^2 + c_2x_2 + c_3x_1x_2 + c_4x_2 + c_5y + c_6,$$
where the coefficients $c_1,\ldots, c_6$ follow a uniform distribution between $-1$ and $1$.
The random generated polynomials are defined over the domain $I_2 = \big[-2, 2\big]^2$. For each randomly generated polynomial, we perform the abstraction refinement on the domain $I_2$, iteratively. In every iteration, we perform under-approximation, over-approximation of the original polynomial over a selected ambiguous region, and a split of the ambiguous region. Next, we compute the volume of the remaining ambiguous region after each action was implemented. 
%
%
The labels are a one-hot vector of dimension three where each component represents the action that leads to the maximum reduction in the volume of the ambiguous region, either under-approximation, over-approximation or divide the region into two regions.
We ran the abstraction refinement process on all the generated polynomials to collect the data $\big(\underline{B^i}_{p_j, L} ,\overline{B^i}_{p_j, L}, \underline{B^i}_{\nabla p_j, L}, \overline{B^i}_{\nabla p_j, L}\big)$, where $i$ denote the index of the sample. We generate $50000$ samples for training, $10000$ samples for validation, and $10000$ for testing.

\subsubsection{Data Normalization}
In the literature of NN \cite{lecun}, it is important to normalize the data when the data vary across a wide range of values. This normalization leads to faster training and improves the generalization performance of the NN \cite{lecun}. Therefore, we normalize all the input data to a zero mean and unit variance by adopting a simple affine transformation $\text{data\_sample} \leftarrow \frac{\text{data\_sample} - \mu}{\sigma}$, where $\mu$ and $\sigma$ are the mean of the data and its standard deviation. The $\mu$ and $\sigma$ parameters are initialized with respectively the empirical mean and standard deviation of the dataset and they are computed offline before the training.





\begin{figure*}[!t]
    \centering
	\includegraphics[width=0.32\textwidth]{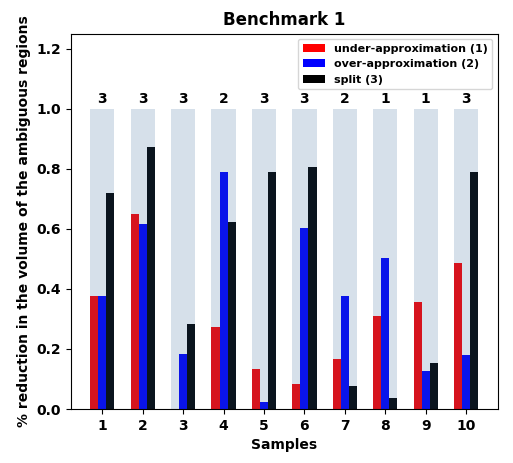}
	\includegraphics[width=0.32\textwidth]{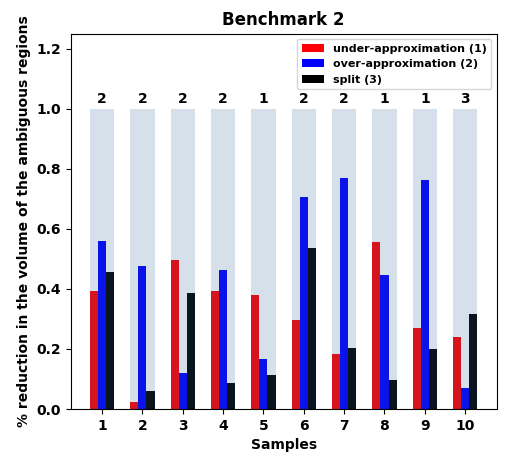}
	\includegraphics[width=0.32\textwidth]{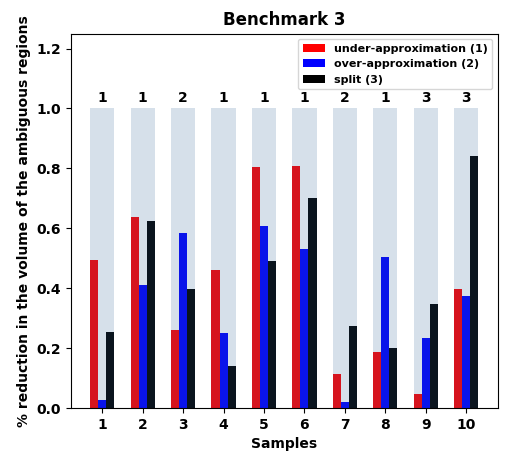}
	\includegraphics[width=0.32\textwidth]{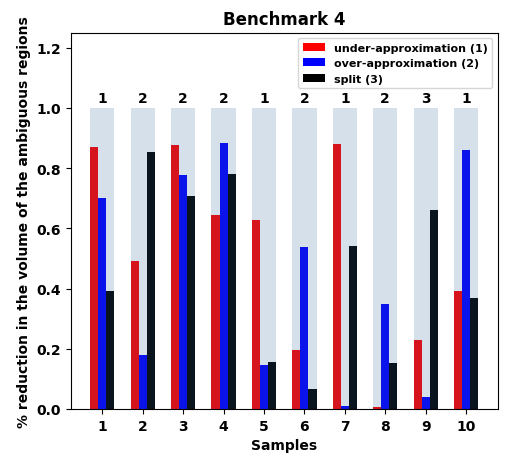}
	\includegraphics[width=0.32\textwidth]{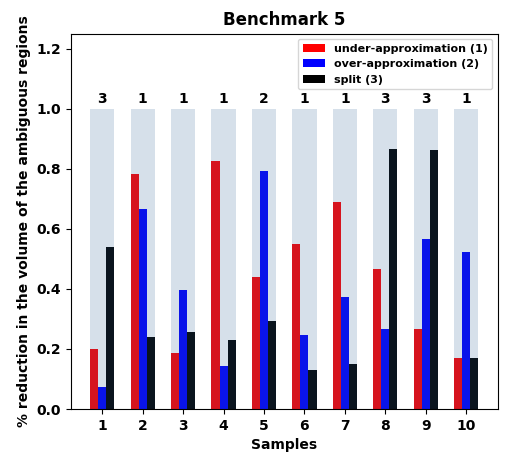}
	\includegraphics[width=0.32\textwidth]{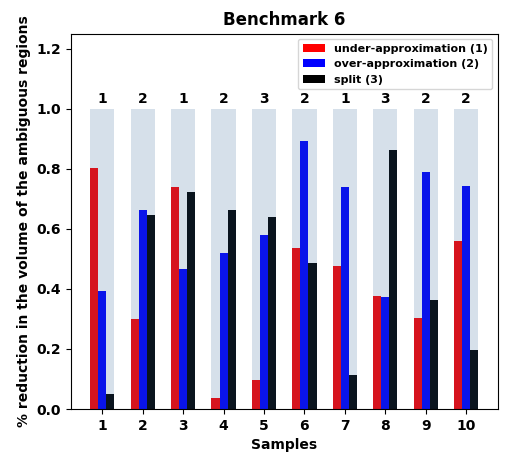}
 	\vspace{-3mm}
	\caption{Percentage in reduction of the volume of ambiguous regions along with the NN output number (the number is at the top of histograms) for 20 samples for 6 evaluation benchmarks described in Table II.}
	\label{fig:nn_performance}
	\vspace{-5mm}
\end{figure*}

 \subsection{NN's evaluation}
 We evaluated the NN on 6 different benchmarks as follows:
 \begin{itemize}
     \item  In the first and second benchmarks, we generate the same random quadratic polynomials but in a different domain: $\big[-4, 4\big]^2$ and $\big[-10, 10\big]^2$. This choice is made to test the generalization of the NN outside the data domains that were used in its training. This is important since the Bernstein coefficients of a polynomial (the input to the NN) depend on the input region $I_n$.
     
     \item In the third and fourth benchmarks, we generated random polynomials with degrees $4$ and $10$ over the domain $\big[-2, 2\big]^2$. These benchmarks are used to validate the generalization of the NN to polynomials of orders higher than the ones used in its training.
     
     \item Finally, in the fifth and sixth benchmarks, we generated random polynomials with higher dimensions, i.e., with dimension $n = 4$ and $n = 7$. 
 \end{itemize}
In summary, these benchmarks will help us to answer the following question: can the trained NN generalize to new data with different domains (benchmarks 1 and 2), higher orders (benchmarks 3 and 4), and higher dimensions (benchmarks 5 and 6)? More detail about the different benchmarks is shown in Table \ref{TabBenchmarks}.

Figure~\ref{fig:nn_performance} shows the performance of the trained NN over 20 random samples of each of the six benchmarks. For each sample, we used the framework in~\cite{polyar} to compute the ground-truth percentage in the reduction of the volume of ambiguous regions after applying every action under-approximation, over-approximation, or split. We then evaluated the NN on each sample and reported in Figure~\ref{fig:nn_performance} both the ground-truth reduction of the ambiguous regions (as bars) against the index of the action suggested by the NN (as the text above the bars).
As it can be seen from Figure \ref{fig:nn_performance}, except for the second sample of the first benchmark, the NN outputs represent the actions that lead to the maximum reduction of the ambiguous region's volume.

\begin{table}[t!]
\caption{Evaluation of the trained NN on the six different benchmarks. 
}
\begin{adjustbox}{width=0.8\columnwidth,center}
\begin{tabular}{|c|c|c|c|c|c|}
    \hline
     Benchmark & $p\left(x\right)$ & $n$ & order & region & Accuracy  \\
    \hline
    \hline
   1  & $c_1x_1^2 + c_2x_2 + c_3x_1x_2 + c_4x_2 + c_5y + c_6$ & $2$ & $2$ &  $[-4, 4]^2$ &  $95\%$\\
    \hline 
   2 & $c_1x_1^2 + c_2x_2 + c_3x_1x_2 + c_4x_2 + c_5y + c_6$ & $2$ & $2$ & $[-10, 10]^2$  &  $93\%$ \\
     \hline 
   3  & $c_1x_1^4 + c_2x_2^3 + c_3x_1^4x_2^3 + c_4x_1^3 + c_5 $ & $2$ & $4$ & $[-2, 2]^2$ &  $88\%$ \\
   \hline
   4  & $c_1x_1^{10} + c_2x_2^5 + c_3x_1^5x_2^3 + c_4x_1^5 + c_5 $ & $2$ & $10$ & $[-2, 2]^2$ &  $87\%$ \\
    \hline 
   5  & $c_1x_1^3 + c_2x_2^3 + c_3x_3^3 + c_4x_4^3 $ & $4$ & $3$ & $[-2, 2]^4$ & $81\%$\\
      \hline 
    6  & $c_1x_1^3 + c_2x_2^3 + c_3x_3^3 + c_4x_4^3 + c_5x_5^3 + c_6x_6^3 + c_7x_7^3 $ & $7$ & $3$ & $[-2, 2]^7$ & $80\%$\\
      \hline
\end{tabular}
\end{adjustbox}
\label{TabBenchmarks}
\end{table}

Finally, we ran the same experiment for 1000 samples and report the percentage of samples for which the NN was able to predict the action that leads to the maximum reduction in the ambiguous region's volume. As it can be seen from Table \ref{TabBenchmarks}, the trained NN is able to generalize on the different benchmarks. For instance, evaluating the NN on different domains results in the lowest accuracy of $93\%$. Furthermore, evaluating the NN on polynomials with higher-order results in an accuracy of $87\%$. Finally, the NN achieves $80\%$ on higher dimension benchmarks.

\begin{figure*}[!t]
\centering
\resizebox{.9\textwidth}{!}{
    \centering
    \begin{tabular}{ c | c | c | c |}
         $m$ & SAT/UNSAT & \textbf{Execution times vs Polynomial Order} & \textbf{Execution times vs Number of Variables}\\\hline
         
         \multirow{0.5}{*}{}
         
         & 
         
         \multicolumn{3}{c|}{        \includegraphics[width=0.9\columnwidth, height = 0.04 \textheight]{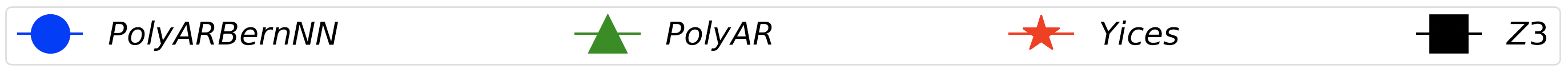}}

         \\ \hline
          \multirow{2}{*}{1} 
          & 
          \raisebox{5.0\totalheight}{UNSAT}
          &
         \includegraphics[width=0.65\columnwidth,trim=0mm 0 8mm 2mm, clip]{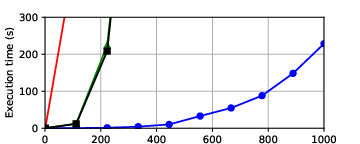}
         &
         \includegraphics[width=0.65\columnwidth,trim=0mm 0 8mm 2mm, clip]{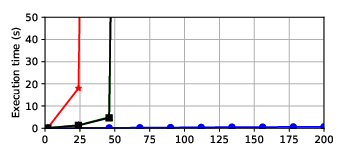}
         \\ \cline{2-4}
         %
          & 
          \raisebox{5.0\totalheight}{SAT}
          &
         \includegraphics[width=0.65\columnwidth,trim=0mm 0 8mm 2mm, clip]{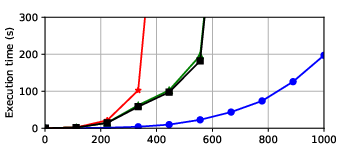}
         &
         \includegraphics[width=0.65\columnwidth,trim=0mm 0 8mm 2mm, clip]{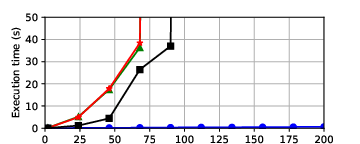}
        \\ \hline
         \multirow{2}{*}{5} 
          & 
          \raisebox{5.0\totalheight}{UNSAT}
          &
         \includegraphics[width=0.65\columnwidth,trim=0mm 0 8mm 2mm, clip]{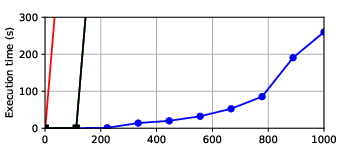}
         &
         \includegraphics[width=0.65\columnwidth,trim=0mm 0 8mm 2mm, clip]{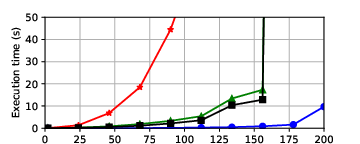}
         \\ \cline{2-4}
         %
          & 
          \raisebox{5.0\totalheight}{SAT}
          &
         \includegraphics[width=0.65\columnwidth,trim=0mm 0 8mm 2mm, clip]{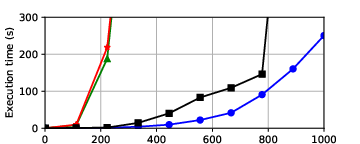}
         &
         \includegraphics[width=0.65\columnwidth,trim=0mm 0 8mm 2mm, clip]{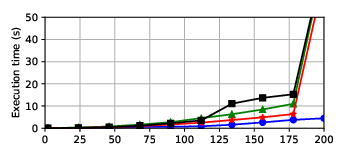}
         \\ \hline
         
          \multirow{2}{*}{10} 
          & 
          \raisebox{5.0\totalheight}{UNSAT}
          &
         \includegraphics[width=0.65\columnwidth,trim=0mm 0 8mm 2mm, clip]{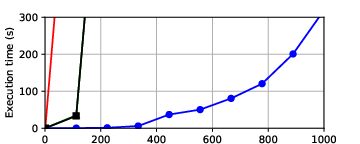}
         &
         \includegraphics[width=0.65\columnwidth,trim=0mm 0 8mm 2mm, clip]{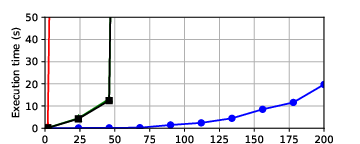}
         \\ \cline{2-4}
         %
          & 
          \raisebox{5.0\totalheight}{SAT}
          &
         \includegraphics[width=0.65\columnwidth,trim=0mm 0 8mm 2mm, clip]{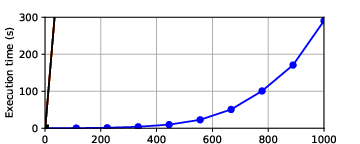}
         &
         \includegraphics[width=0.65\columnwidth,trim=0mm 0 8mm 2mm, clip]{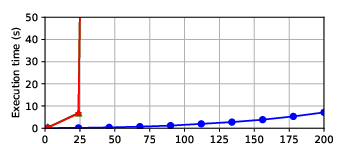}
         \\ \hline

         
    \end{tabular}
    }
    \caption{\small{Scalability results of PolyARBerNN in the UNSAT case for $1$, $5$, and $10$ constraints. (left) evolution of the execution time in seconds as a function of the order of the polynomials, (right) evolution of the execution time in seconds as a function of the number of variables. The timeout is equal to 1 hour.}}
    \label{figUNSAT}
    \vspace{-5mm}
\end{figure*}

\section{Numerical Results - Scalability Results}
In this section, we study the scalability of PolyARBerNN in terms of execution times by varying the order, the number of variables, and the number of the polynomial constraints for instances when a solution exists (the problem is Satisfiabile or SAT for short) and when a solution does not exist (or UNSAT for short). We will perform this study in comparison with state-of-the-art solvers including our previous solver PolyAR, Z3 8.9, and Yices 2.6.
Next, we compare the performance of PolyARBerNN against a theorem prover named PVS which implements a Bernstein library to solve multivariate polynomial constraints \cite{pvsnasa}. 
Finally, we compare the scalability of the PolyAROpt optimizer against the built-in optimization library in Z3 8.9 to solve an unconstrained multivariate polynomial optimization problem with varying order and number of variables.

\subsection{Scalability of PolyARBerNN against other solvers}

In this experiment, we compare the execution times of PolyARBerNN against the PolyAR tool~\cite{polyar}, Z3 8.9, and Yices 2.6. We consider two instances of Problem 1: an UNSAT and SAT problems. For each instance, we consider three scenarios, $m=1$, $m=5$, and $m=10$ where $m$ is the number of polynomial constraints. First, we vary the order of the polynomials from 0 to 1000 while fixing the number of variables (and hence the dimension of the search space) to two. Alternatively, we also fix the order of the polynomials to $30$ while varying the number of variables from 1 to 200.
%
We set the timeout of the simulations to be $1$ hour. Figure~\ref{figUNSAT} reports the execution times for all the experiments whenever the problem is UNSAT and SAT.
As evidenced by the figures, PolyARBerNN succeeded to solve the instances of Problem 1 for all orders and numbers of variables in a few seconds. For instance, solving $10$ polynomial constraints with $200$ variables and a maximum order of $30$ took around $20~s$ leading to a speed-up of $200\times$ compared to Z3 and Yices. On the other hand, other solvers are incapable of solving the polynomial constraints for all orders or number of variables and they time out after one hour.
These results show the scalability of the proposed approach by including Bernstein coefficients to prune the search space and a NN to guide the abstraction refinement.


\subsection{Scalability of PolyARBerNN versus PVS}
In this experiment, we want to investigate the effect of the NN on the overall performance of the system. 
We compare PolyARBerNN against the PVS theorem prover which also uses Bernstein coefficients to reason about polynomial constraints. The polynomials and the domains of the associated variables are given in~\cite{pvsnasa} which were originally used to assess the scalability of the PVS theorem prover. As evinced by the results in Table~\ref{tab:nasalib}, using a combination of Bernstein and adding the NN to guide the convex abstractions (in PolyARBerNN) leads to additional savings in the execution time, leading to a speedup of $483\times$ in the Heart dipole example.

\begin{table}[t!]
\caption{Scalability results for PolyARBerNN against PVS.}
\label{tab:nasalib}
\begin{adjustbox}{width=0.4\columnwidth,center}
\begin{tabular}{|c|c|c|c|}
    \hline
     Problem  & PVS & \multicolumn{2}{c|}{PolyARBerNN}\\
     \cline{3-4}
     & & Times (sec) & Speedup\\
    \hline
    \hline
    Schwefel  & $1.23 $ &   $\mathbf{0.022}$ & $54.9 \times$ \\
   \hline
   Reaction diffusion   & $0.25$ &  $\mathbf{0.018}$ & $12.88 \times$ \\
    \hline 
   Caprasse  & $0.03$ &  $\mathbf{0.025}$ & $0.19 \times$  \\
     \hline 
   Lotka-Volterra   & $0.22$ &  $\mathbf{0.026}$ & $7.46 \times$ \\
    \hline 
   Butcher   & $0.46$ &  $\mathbf{0.029}$ & $14.86 \times$\\
      \hline 
    Magnetism   & $1.82$ &  $\mathbf{0.028}$ & $64.0 \times$ \\
      \hline
    Heart dipole   & $15.01$ &  $\mathbf{0.031}$ & $483.19 \times$ \\
      \hline
\end{tabular}
\end{adjustbox}
\vspace{-2mm}
\end{table}

\begin{figure}[!b]
    \centering
	\includegraphics[width=0.48\textwidth]{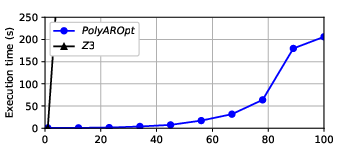} 
	\includegraphics[width=0.48\textwidth]{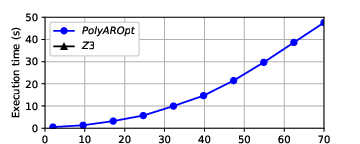} 
	\caption{Scalability results of PolyAROpt for unconstrained optimization. (left) evolution of the execution time in seconds as a function of the order of the polynomial. (right) evolution of the execution time in seconds as a function of the number of variables. The timeout is equal to 1 hour. 
	}
	\label{optimiz}
\end{figure}

\subsection{Scalability of PolyAROpt against other solvers}
We compare the scalability results of PolyAROpt with the Z3 solver due to the fact that Z3 has a built-in optimization library. Unfortunately, Yices does not have such an optimizer. We set the timeout of the simulation to be 1 hour. Figure \ref{optimiz} reports the execution times of two experiments of computing the minimum and maximum of unconstrained optimization. As evidenced by the two figures, PolyAROpt succeeded to solve the unconstrained optimization problem for all orders and number of variables. For instance, solving the unconstrained optimization problem with $70$ variables and a maximum order of $3$ took around $50~seconds$. On the other hand, the Z3 solver is incapable of solving the unconstrained optimization problem for all orders or number of variables and times out.

\section{Numerical Results - Use Cases}
In this section, we provide two engineering use cases. The first one focuses on the use of PolyARBerNN to synthesize stabilizing non-parametric controllers for nonlinear dynamical systems. The second use case focuses on the use of PolyAROPT to perform reachability analysis of polynomial dynamical systems.

\subsection{Use Case 1: Nonlinear Controller Design for a Duffing Oscillator}
In this subsection, we assess the scalability of the PolyARBerNN solver compared to state-of-the-art solvers for synthesizing a non-parametric controller for a Duffing oscillator reported by~\cite{MPCdesign}. All the details of the dynamics of the oscillator and how we generated the polynomial constraints can be found in \cite{polyar}. We denote by $n$ the dimension of the Duffing oscillator.
We consider two instances of the controller synthesis problem for the Duffing oscillator with the following parameters:
\begin{itemize}
    \item $n=3$, $\zeta=1.0$, $x\left(0\right)=[0.15,0.15,0.15]$, $L_1\left(x\left(k\right),u\left(k\right)\right)=\left(-x_1^{3}\left(k\right)+x_3^{3}\left(k\right)+ u\left(k\right) - 2 \right)^{51}$, $L_2\left(x\left(k\right),u\left(k\right)\right)=x_1^{51}\left(k\right)x_3^7\left(k\right) + x_1^9\left(k\right)x_3^5\left(k\right) - 5  x_2^4\left(k\right) - x_2^2\left(k\right)u^2\left(k\right)$, which results in $9$ polynomial constraints with $4$ variables and max polynomial order of $153$. 
    \item $n=4$, $\zeta=1.75$, $x\left(0\right)=[0.1,0.1,0.01, 0.1]$, $L_1\left(x\left(k\right),u\left(k\right)\right)=x_1^{4}\left(k\right)+x_2^{4}\left(k\right)+x_3^{4}\left(k\right)+x_4^{4}\left(k\right)-u^{4}\left(k\right)$, $L_2\left(x\left(k\right),u\left(k\right)\right)=-x_1^{51}\left(k\right)x_3^{20}\left(k\right) - 5x_2^4\left(k\right) - x_2^2\left(k\right)u^2\left(k\right)$, $L_3\left(x\left(k\right),u\left(k\right)\right)=\left(x_1x_2^2 - u\left(k\right) - 100\right)^{41}$, which results in $12$ polynomial constraints with $5$ variables and max polynomial order of $82$.
\end{itemize}

\begin{figure}
\resizebox{.75\textwidth}{!}{
    \centering
    \begin{tabular}{ c | c | c |}
         $n$ & State Space & Execution Time Evolution over time\\\hline
           3 & 
         \raisebox{-0.5\totalheight}{\includegraphics[width=0.6\columnwidth,trim=8mm 0 15mm 0, clip]{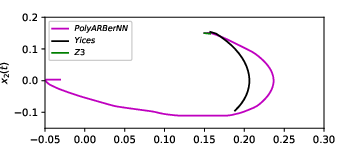}} &
         \raisebox{-0.5\totalheight}{\includegraphics[width=0.6\columnwidth,trim=8mm 0 15mm 0, clip]{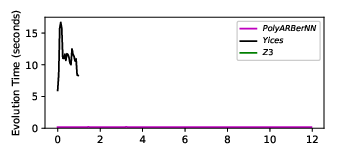}} \\ \hline
          4 & 
         \raisebox{-0.5\totalheight}{\includegraphics[width=0.6\columnwidth,trim=8mm 0 15mm 0, clip]{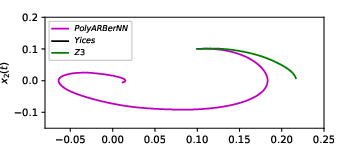}} &
         \raisebox{-0.5\totalheight}{\includegraphics[width=0.6\columnwidth,trim=8mm 0 15mm 0, clip]{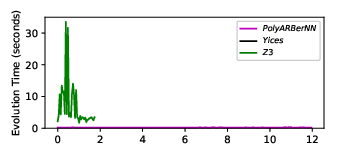}} \\ \hline
    \end{tabular}
    }
    \caption{Results of controlling the Duffing oscillator with different $n$ (left) evolution of the states $x_1(k)$ and $x_2(k)$ for the solvers in the state-space, (right) evolution of the execution time of solvers during the $12$ seconds. The timeout is equal to $60s$. Trajectories are truncated once the solver exceeds the timeout limit.}
    \label{fig:duff}
    \vspace{-3mm}
\end{figure}

We feed the resultant polynomial inequality constraint to PolyARBerNN, Yices, and Z3. We solve the feasibility problem for $n=3$ and $n=4$. We set the timeout to be $60s$. Figure~\ref{fig:duff} (left) shows the state-space evolution of the controlled Duffing oscillator for different solvers for number of variables $n$ of $3$ and $4$. Figure~\ref{fig:duff} (right) shows the evolution of the execution time of the solvers during the $12$ seconds. As it can be seen from Figure 7, our solver PolyARBerNN succeeded to find a control input $u$ that regulates the state to the origin for all $n$. However, off-the-shelf solvers are incapable of solving all two instances, and they early time out after $60~seconds$ out of the simulated $12~seconds$. 
This shows the scalability of the proposed approach.

\begin{figure*}[b]
    \centering
	\includegraphics[width=0.45\textwidth, height=0.2\textheight]{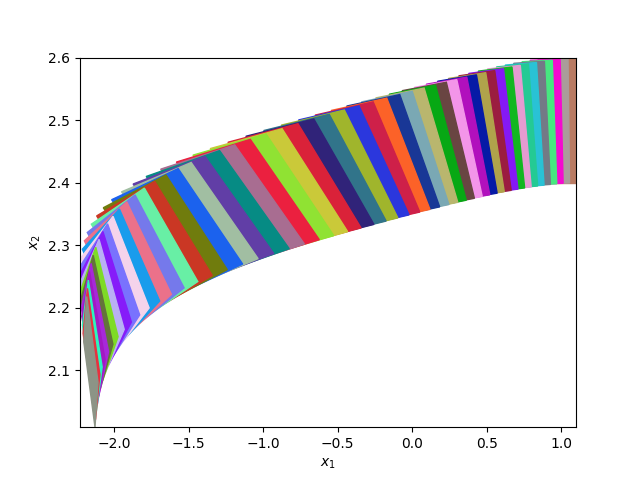} 
	\includegraphics[width=0.45\textwidth, height=0.2\textheight]{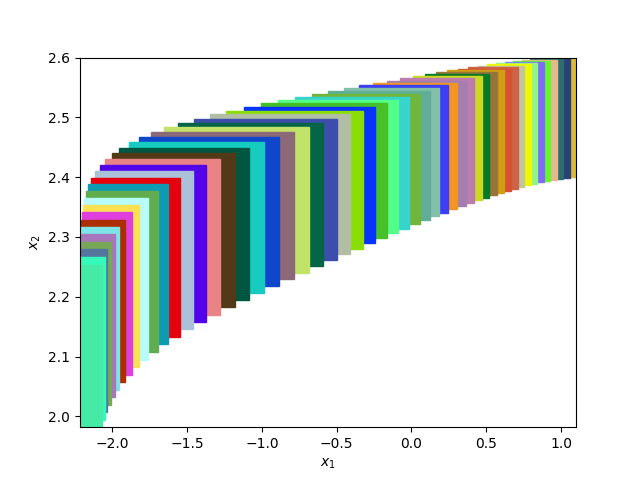}
 	\vspace{-3mm}
	\caption{Reachability computation for the FitzHugh-Nagumo neuron model. \textbf{Left:} using PolyAROpt for number of steps $K = 50$. \textbf{Right:} using Sapo for number of steps $K = 50$.
	}
	\label{fig:nagumo}
\end{figure*}

\begin{figure*}[!t]
    \centering
    \includegraphics[width=0.32\textwidth, height=0.2\textheight]{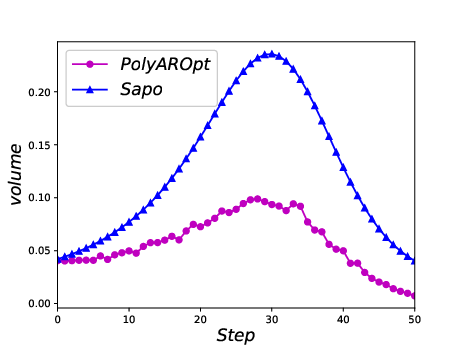}
	\includegraphics[width=0.32\textwidth, height=0.2\textheight]{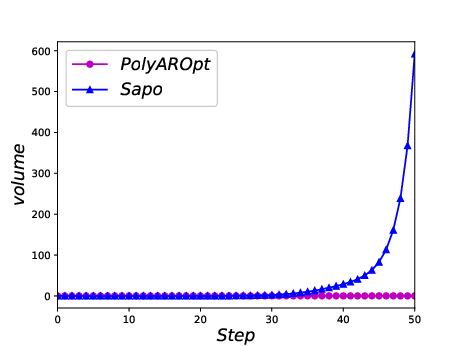} 
	\includegraphics[width=0.32\textwidth, height=0.2\textheight]{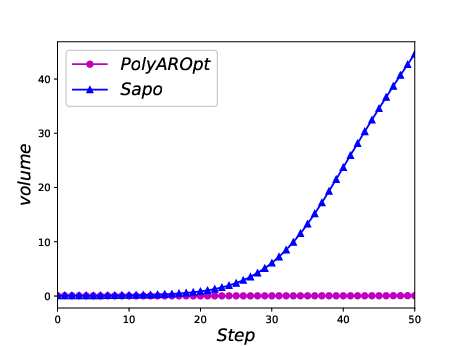} 
 	\vspace{-3mm}
	\caption{Volume of the reachable set of states that are obtained using PolyAROpt and Sapo (left) FitzHugh-Nagumo, (center) Duffing oscillator, and (right) Jet flight.
	}
	\label{fig:volumes}
\end{figure*}

\subsection{Use Case 2: Reachability analysis of a discrete polynomial dynamical systems}
In this section, we show how to use PolyAROpt to compute the reachable set of states for discrete-time polynomial systems. We consider a discrete polynomial dynamical system of the following form:
\begin{align}\label{polydynsys}
    x_{k + 1} = f\left(x_k\right),~k \in \mathbb{N}, x_0 \in Q_0,
\end{align}
where $f:\mathbb{R}^n \rightarrow \mathbb{R}^n$ is a multivariate polynomial map of a maximum degree $L = \left(l_1, \cdots, l_n\right)$, and $Q_0$ is a bounded polyhedron in $\mathbb{R}^n$. In this subsection, we consider a bounded-time reachability analysis of the system in~\eqref{polydynsys}. Computing the exact reachability sets of this type of dynamic system is hard. Therefore, we overapproximate the exact set with a simplified set such as bounded polyhedra. Bounded polyhedra are easy to handle and analyze. Computing the reachability set, after a finite time $K$, involves computing sequentially the reachability set at every time step $k$  using the following relation $Q_{k+1} = f\left(Q_k\right), k = 0, \cdots, K-1$, where $Q_k$ is a bounded polyhedra. 
A bounded polyhedra is represented with an H-representation $Q_k = \left(A_k, b_k\right) = \{ x \in \mathbb{R}^n | A_k x \leq b_k\}$, where the inequality is a point-wise inequality. The template $A_k$ represents the directions of $Q_k$'s faces and $b_k$ represents their positions. Given a polyhedra $Q_k = \left(A_k, b_k\right)$, we need to compute $Q_{k+1} = f\left(Q_k\right)$. We assume that $A_k \in \mathbb{R}^{m \times n}$ is given. Now, we need to compute $b_{k+1} \in \mathbb{R}^m$ which can be obtained through the following optimization problem \cite{mohamedreach}:
\begin{align} \label{reachoptimiz}
   -b_{k + 1, i} \leq u_{k + 1, i} = \min\limits_{x \in Q_k} - A_{k + 1, i} f\left(x\right), \forall i = 1, \cdots, m,
\end{align}
where $A_{k + 1, i}$ and $b_{k + 1, i}$ are the $i^{th}$ row and component of the templates $A_{k + 1}$ and $b_{k + 1}$, respectively. In every step $k \in \mathbb{N}$, we compute an upper bound for $-b_{k + 1, i}$, by solving the optimization problem \eqref{reachoptimiz} using PolyAROpt and then an overapproximation of the reachability set $Q_{k + 1}$ is computed. In order to use PolyAROpt to solve \eqref{reachoptimiz}, we need to overapproximate the polyhedra $Q_k$ with a hyperreactangle $R_k$. Therefore, the optimization problem \eqref{reachoptimiz} is modified as follows:
\begin{align} \label{reachoptimizmod}
   -b_{k + 1, i} \leq u_{k + 1, i} = &\min\limits_{x \in R_k} - A_{k + 1, i} f\left(x\right), \forall i = 1, \cdots, m,\nonumber \\
   &\text{subject to}~x~\in~Q_k.
\end{align}

We implemented the reachability computation method and tested it on three dynamical systems:
\begin{itemize}
    \item FitzHugh-Nagumo Neuron: is a polynomial dynamic systems that models the electrical activity of a neuron~\cite{FitzHughNagumo}. We performed reachability analysis for $K = 50$ time steps with the initial set of states $Q_0 = [0.9, 1.1] \times [2.4, 2.6]$,
    \item Duffing Oscillator: is a discrete-time version of a nonlinear oscillator model~\cite{polyar}. We performed reachability analysis for $K = 50$ time steps with the initial set of states $Q_0 = [2.49, 2.51] \times [1.49, 1.51]$,
    \item Jet Flight: is a discrete-time version of a jet flight model~\cite{jetflight}. We performed reachability analysis for $K = 50$ time steps with the initial set of states $Q_0 = [0.9, 1.2] \times [0.9, 1.2]$.
\end{itemize}
All the details of the dynamics of the three dynamical systems and how we generated the polynomial constraints can be found in \cite{mohamedreach, FitzHughNagumo, polyar, jetflight}. We computed the reachable sets for these dynamical systems and compared our results with Sapo \cite{sapo}. Sapo is a tool proposed for the reachability analysis of the class of discrete-time polynomial dynamical systems. Sapo linearizes the optimization problem~\eqref{reachoptimiz} using the Bernstein form of the polynomial and was shown to outperform state-of-the-art reachability analysis tools like Flow$^*$~\cite{chen2013flow}. Figure~\ref{fig:nagumo} shows the reachable sets computed by PolyAROpt compared to Sapo for the  FitzHugh-Nagumo Neuron model. Inspecting the results in Figure~\ref{fig:nagumo} qualitatively shows that PolyAROpt is capable of computing tighter sets. To quantitatively compare the results of PolyAROpt and Sapo, we compute the volume of each reachable set (for different time steps). Figure~\ref{fig:volumes} shows these volumes for all the three dynamical systems mentioned above.
As evident by the results in Figure~\ref{fig:volumes}, PolyAROpt results in reachable sets that are tighter than the one obtained from the Sapo thanks to PolyAROpt's ability of solving the polynomial optimization problem without any relaxations. Such ability to avoid relaxation results in several orders of magnitude reduction in the volume of the reachable sets compared to Sapo; a significant improvement in the analysis of such dynamical systems.

~\\
\noindent \textbf{Conclusions.} In this paper, we proposed PolyARBerNN, a solver for polynomial inequality constraints. We proposed a systematic methodology to design neural networks that can be used to guide the abstraction refinement process by bridging the gap between neural network properties and the properties of polynomial representations. We showed that the use of Bernstein coefficients leads the way to designing better neural network guides and provides an additional abstraction that can be used to accelerate the solver. We generalized the solver to reason about optimization problems. We demonstrated that the proposed solver outperforms state-of-the-art tools by several orders of magnitude.

\begin{acks} 
This work was supported by the National Science Foundation under grant numbers \#2002405 and \#2139781.
\end{acks}

\bibliographystyle{ACM-Reference-Format}
\bibliography{bibliography}

\begin{appendix}

\section{Proof of Proposition~\ref{prop:nn_estimate}}
\begin{proof}
Note that $NN_{a_p \rightarrow X_0} (a_p)$ is a vector-valued function which returns the roots $X_0 = (x^1_0, x^2_0, \cdots, x^{n_r}_0)$ of the polynomial $p$. Therefore, to upper bound the Lipschitz constant of $NN_{a_p \rightarrow X_0} (a_p)$, we will start by upper bounding the Lipschitz constant of its components functions $ NN^j_{a_p \rightarrow x^j_0} (a_p)$, $ 1 \leq j \leq n_r$. To that end, consider two polynomials with coefficients $a_p$ and $a'_p$ such that $\norm{a_p - a'_p} \le \epsilon$. Therefore:
\begin{align}
  & \norm{NN^j_{a_p \rightarrow x^j_0} (a_p) - NN^j_{a_p \rightarrow x^j_0} (a'_p)}_2  = \norm{ x^j_0(a_p) - x^j_0(a'_p)}_2 \leq C_{a_p}(x^j_0) \norm{a_p - a'_p} \leq \overline{C}_{a_p} \epsilon  
\end{align}
where $x^j_0(a_p)$ and $x^j_0(a'_p)$ are the location of the $j$th root for the polynomials with coefficients $a_p$ and $a'_p$, respectively. The last two inequalities follow from the definition of the condition number (Definition~\ref{def:condition}). 
Now, 
\begin{align}\label{rootNN}
\norm{NN_{a_p \rightarrow X_0} (a_p) - NN_{a_p \rightarrow X_0} (a'_p)}_2
&= \sqrt{\sum\limits_{j = 1}^{n_r} \norm{NN^j_{a_p \rightarrow x^j_0} (a_p) - NN^j_{a_p \rightarrow x^j_0} (a'_p)}_2^2}
%
%
\\&\leq  \sqrt{n_r \overline{C}_{a_p}^2 \epsilon^2} = \sqrt{n_r} \overline{C}_{a_p} \epsilon.
\end{align}
From which we conclude that the Lipschitz constant of $ NN_{a_p \rightarrow X_0} (a_p)$ is bounded by $\sqrt{n_r} \overline{C}_{a_p} = \mathcal{O}(n_r \overline{C}_{a_p})$.
\end{proof}

\section{Proof of Proposition~\ref{prop:nn_split}}
\begin{proof}
We assume that the number of sub-regions $l$ is fixed for each dimension $n$, and $l^{\frac{1}{n}} = k~\in \mathbb{N}$. 
Partitioning the input space into $l$ sub-regions occurs by dividing the interval for each dimension into $k$ sub-intervals. Without loss of generality, the sub-neural network $NN_{I_n \rightarrow I^{i}_n} (I_n)$ for $n=1$ can be defined as:
\begin{equation}\label{part_eq}
(\underline{d}^i, \overline{d}^i) 
= NN_{I_n \rightarrow I^{i}_n} (I_n) = NN_{I_n \rightarrow I^{i}_n} (\underline{d}, \overline{d})
= \left(\underline{d} + \frac{i}{k} \left(\overline{d} - \underline{d}\right), \; \overline{d} + \frac{i+1}{k} \left(\overline{d} - \underline{d}\right) \right) = 
\begin{bmatrix} 
1+i/k & -i/k \\
-(i+1)/k & 1 + (i+1)/k
\end{bmatrix} 
\begin{bmatrix}
\underline{d} \\ \overline{d}
\end{bmatrix}
\end{equation}
A generalization to a higher dimension is straightforward by replacing $i$ with a multi-index in each dimension. Note that $NN_{I_n \rightarrow I^{i}_n} (I_n)$ is a multivariate linear function in its inputs and hence its Lipschitz constant can be computed as the largest singular value. Indeed, the linear function depends only on the constant $k$ (which depends on the constant $l$ and the dimension $n$) from which we conclude that the Lipschitz constant of $NN_{I_n \rightarrow I^{i}_n}$ is some constant $\mathcal{O}(n)$.

\end{proof}

\section{Proof of Proposition~\ref{prop:nn_indicator}}
\begin{proof}
It follows from equations~\eqref{eq:estimate}-\eqref{eq:indicator} that the sub-neural network $NN_{X_0 \rightarrow ZC_i}$ can be written as:
\begin{align}
    ZC_i(a_p, I^i_n) 
    &= NN_{X_0 \rightarrow ZC_i} \left( NN_{a_p \rightarrow X_0} (a_p), NN_{I_n \rightarrow I^{i}_n} (I_n)\right)
    = NN_{X_0 \rightarrow ZC_i} \left( (x_0^1, \ldots, x_0^{n_r}), (\underline{d}^i, \overline{d}^i) \right).
\end{align} 
The indicator variable $ZC_i$ should be set to zero whenever all the roots $x_0^j$ lies outside the hyperrectangle $I_n^i(\underline{d}^i, \overline{d}^i)$. First note that a root $x_0^j$ lies outside $I_n^i(\underline{d}^i, \overline{d}^i)$ if and only if the following condition holds:
\begin{align}
    x^j_{0} \not\in I_n^i(\underline{d}^i, \overline{d}^i) \quad \Longleftrightarrow \quad
    \sum\limits_{k = 1}^{k = n} \bigg|x^j_{0,k} - \underline{d}^i_k\bigg| +  \bigg| x^j_{0,k} - \overline{d}^i_k\bigg| - \sum\limits_{k = 1}^{n} \left(\overline{d}^i_k - \underline{d}^i_k\right) > 0
\end{align}
where $x^j_{0,k}, \underline{d}^i_k,$ and $\overline{d}^i_k$ are the $k$th element in the vectors $x^j_{0}, \underline{d}^i$ and $\overline{d}^i$, respectively. Hence, the indicator variable $ZC_i$ should be set to zero whenever the following conditions hold:
\begin{align}
    ZC_i(a_p, I^i_n) = 0 \quad \Longleftrightarrow \quad \text{max}_{j \in \{1,\ldots,n_r\}} \left(\sum\limits_{k = 1}^{k = n} \bigg|x^j_{0,k} - \underline{d}^i_k\bigg| +  \bigg| x^j_{0,k} - \overline{d}^i_k\bigg| - \sum\limits_{k = 1}^{n} \left(\overline{d}^i_k - \underline{d}^i_k\right)\right) = 0
    \label{eq:zci_condition}
\end{align}
Before we compute the Lipschitz constant of $NN_{X_0 \rightarrow ZC_i}$ in Equation~\eqref{eq:zci_condition}, we recall the following identities. Consider two functions $f(x)$ and $g(x)$ with Lipschitz constants $L_f$ and $L_g$, respectively. Then:
\begin{itemize}
    \item The Lipschitz constant of $\max(f(x),g(x))$ is bounded by $L_f + L_g$.
    \item  The Lipschitz constant of $f(x) + g(x)$ is bounded by $\max(L_f, L_g)$.
\end{itemize}
Now notice that $|x^j_{0,k} - \underline{d}^i_k| = max(x^j_{0,k} - \underline{d}^i_k,0) + max(-x^j_{0,k} + \underline{d}^i_k,0)$. Applying the identities above along with the fact that the Lipschitz constant of $x^j_{0,k} - \underline{d}^i_k$ is $\mathcal{O}(1)$,  we conclude that the Lipschitz constant of $|x^j_{0,k} - \underline{d}^i_k|$  is $\mathcal{O}(1)$. Hence, the Lipschitz constant of $\sum\limits_{k = 1}^{k = n} \bigg|x^j_{0,k} - \underline{d}^i_k\bigg| +  \bigg| x^j_{0,k} - \overline{d}^i_k\bigg| - \sum\limits_{k = 1}^{n} \left(\overline{d}^i_k - \underline{d}^i_k\right)$ is also $\mathcal{O}(1)$. Finally, the Lipschitz constant of the right hand side of Equation~\eqref{eq:zci_condition} is $\mathcal{O}(n_r)$. We conclude our proof by noticing that all the operators in Equation~\eqref{eq:zci_condition}---namely the absolute value, the max operator, summation, and checking the final value against a constant---can be implemented exactly using ReLU neural networks~\cite{TLL} and hence the neural network $NN_{X_0 \rightarrow ZC_i}$ will also have a Lipschitz constant equal to $\mathcal{O}(n_r)$.

\end{proof}

\section{Proof of Proposition~\ref{prop:nn_count}}
\begin{proof}
This result follows directly by noticing that $NN_{ZC \rightarrow L^{+}/L^{-}}$ can be computed as a linear function:
\begin{align}
    NN_{ZC \rightarrow L^{+}/L^{-}}(ZC_1, \ldots, ZC_l) = v \sum_{i = 1}^l (1 - ZC_i)
\end{align}
where $v$ is a constant that depends on the volume of the hyperrecatngle $I_n$. Since $l$ is a constant, we conclude the result.



\end{proof}

\end{appendix}

\end{document}